\newtheorem{theorem}{Theorem}[section]
\newtheorem{corollary}{Corollary}[section]
\newtheorem{proposition}{Proposition}[section]
\newtheorem{fact}{Fact}[section]
\newcommand{\qed}{\hfill $\Box$ \bigbreak}
\newenvironment{proof}{\noindent {\bf Proof.}}{\qed}
\newcommand{\remove}[1]{}
\begin{document}
\sloppy

\title{Time Versus Cost Tradeoffs for Deterministic\\ Rendezvous in Networks}
\author{Avery Miller, Andrzej Pelc\\
Universit\'{e} du Qu\'{e}bec en Outaouais\\
\url{avery@averymiller.ca}, \url{andrzej.pelc@uqo.ca}}




\maketitle

\begin{abstract}

Two mobile agents, starting from different nodes of a network at possibly different times, have to meet at the same node.
This problem is known as {\em rendezvous}.
Agents move in synchronous rounds.
Each agent has a distinct integer label from the set $\{1,\dots,L\}$.

Two main efficiency measures of  rendezvous are its {\em time} (the number of rounds until the meeting)
and its {\em cost} (the total number of edge traversals). We investigate tradeoffs between these two measures.
A natural benchmark for both time and cost of rendezvous in a network is the number of edge traversals needed for visiting all nodes of the network,
called the exploration time. Hence we express the time and cost 
of rendezvous as functions of an upper bound $E$ on the time of exploration (where $E$ and a corresponding exploration procedure are known to both agents) and of the size $L$ of the label space.
We present two natural rendezvous algorithms. Algorithm {\tt Cheap} has cost $O(E)$ (and, in fact, a version of this algorithm for  
 the model where the agents start simultaneously has cost exactly $E$) and time $O(EL)$. Algorithm {\tt Fast} has both time and cost $O(E\log L)$. Our main contributions are lower bounds showing that,
perhaps surprisingly, these two algorithms capture the tradeoffs between time and cost of rendezvous almost tightly. We show that
any deterministic rendezvous algorithm of cost asymptotically $E$ (i.e., of cost $E+o(E)$) must have time $\Omega(EL)$. On the other hand,
we show that any deterministic rendezvous algorithm with time complexity $O(E\log L)$ must have cost $\Omega (E\log L)$.

\vspace{2ex}
\noindent {\bf Keywords:} rendezvous, deterministic algorithm, mobile agent, cost, time.
\end{abstract}




\section{Introduction}

\subsection{Background}

Two autonomous mobile entities, called agents, starting from different nodes of a network, have to meet at the same node.
This well-researched distributed task is known as {\em rendezvous}.
These mobile entities might represent human-made objects, such as software agents in computer networks or mobile robots navigating in a network of corridors in a mine. They might also be natural, such as people who want to meet in an unknown city whose streets form a network. 
The purpose of meeting might be to exchange data previously collected by the agents,
or to coordinate future network maintenance tasks, for example checking functionality of websites or of sensors forming a network. 

\subsection{Model and Problem Description}

The network is modeled as an undirected connected graph with $n$ nodes.
We seek deterministic rendezvous algorithms that do not
rely on perceiving node identifiers, and therefore can work in anonymous graphs as well  (cf. \cite{alpern02b}). 
The reason for designing such algorithms
is that, even when nodes have distinct identifiers, agents may be unable to perceive them
because of limited sensory capabilities (e.g., a mobile robot may be unable to read signs at corridor crossings), 
or nodes may be reluctant to reveal their identifiers to software agents, e.g., due to security or privacy reasons.
Note that, if nodes had distinct identifiers visible to the agents, the agents might explore the graph and meet at the node
with the smallest identifier, hence rendezvous
would reduce to graph exploration.

On the other hand, we assume that, at each node $v$,
each edge incident to $v$ has a distinct {\em port number} from 
$\{0,\dots,d-1\}$, where $d$ is the degree of $v$. These port numbers are visible to the agents.
Port numbering is {\em local} to each node, i.e., there is no relation between
port numbers at  the two endpoints of an edge. Note that in the absence of port numbers, edges incident to a node
would be undistinguishable for agents and thus rendezvous would be often impossible, 
as an adversary could prevent an agent from taking some edge incident to the current node, and this edge could be a bridge to the part of the network
where the other agent is located.
Security and privacy reasons for not revealing node identifiers to software agents are irrelevant in the case of port numbers, and 
port numbers in the case of a mine or labyrinth can be made implicit, e.g., by marking one edge at each intersection
(using a simple mark legible even by a mobile robot with very limited vision),
considering it as corresponding to port 0, and all other port numbers increasing clockwise.

Agents are initially located at different nodes of the graph and  traverse its edges in synchronous rounds.
They cannot mark visited nodes or traversed edges in any way, and they cannot communicate before meeting.
The adversary wakes up each of the agents, possibly in different rounds. 
Each agent starts executing the algorithm in the round of its wake-up.
It has a clock that ticks at each round and starts at the wake-up round of the agent.
In each round, each agent decides to either remain at the current node,
or to choose a port in order to move to one of the adjacent nodes. 
When an agent enters a node, it learns the node's degree and the port of entry. When agents cross each other
on an edge while traversing it simultaneously in different directions, they do not notice this fact.

Each agent has a distinct integer label from a fixed {\em label space} $\{1,\dots,L\}$, which it can
use in its execution of the deterministic algorithm that both agents execute. It does not know the label nor the starting round of the other agent. 
Notice that, since we study deterministic rendezvous, the absence of distinct labels precludes the possibility of meeting in highly
symmetric networks, such as rings or tori, for which there exist non-trivial port-preserving automorphisms. Indeed, in such networks,
identical agents starting simultaneously and executing the same deterministic algorithm in a distributed way will never meet, since they will be at different nodes in every round. 
In other words, assigning different labels to agents is the only way to break symmetry, as is needed to meet in every network using a deterministic algorithm. 
On the other hand, if agents knew
each other's identities, then the smaller-labelled agent could stay idle, while the other agent would try to find it. In this case rendezvous reduces to graph exploration.   
Assuming such knowledge, however, is not realistic, as agents are often created independently in different parts of the network and they know nothing about each other
prior to meeting.


The rendezvous is defined as both agents being at the same node in the same round.
Two main efficiency measures of a rendezvous algorithm are its {\em time} (the number of rounds from the start of the earlier agent until the meeting)
and its {\em cost} (the total number of edge traversals by both agents before rendezvous). 
\footnote{A different way of counting time and cost (under which our results still hold) is discussed in the Conclusion.}
We investigate tradeoffs between these measures of rendezvous performance.
A natural benchmark for both time and cost of rendezvous in a network is the time of exploration of this network by a single agent,
i.e., the worst-case number of edge traversals needed for visiting all nodes of the network,
taken over all starting nodes. Indeed, this is a lower bound on both the time and the cost of rendezvous:
an adversary can impose a large delay on one of the agents and place it at the node last explored by the other agent. 
Even for simultaneous start, there are many networks for which the best exploration time
is a lower bound on rendezvous time and cost. (One such example is oriented rings.) 
Hence we assume that some upper bound $E$ on the time of exploration starting at any node of the graph is known to the agents,
and that an agent knows how to explore the graph in time at most $E$, starting at any node of the graph. 

We express the time and cost 
of rendezvous as functions of $E$ and the size $L$ of the label space. In the Conclusion, we comment on the situation when no upper bound $E$
is known to the agents. For given parameters $E$ and $L$, we say that
a deterministic rendezvous algorithm works at a cost at most $C$ and in time at most $T$,  
if, for any two agents whose distinct labels are from the label space $\{1,\dots,L\}$ and whose initial positions are arbitrary distinct nodes in a graph that can be explored by a single agent in time $E$, the agents
meet after a total of at most $C$ edge traversals and after at most $T$ rounds since the start of the earlier agent.

A remark is in order about the value of $E$ and how it is calculated. If only an upper bound $m$ on the size of the network is known, then the
best known estimate of the time of a (log-space constructible) exploration is Reingold's \cite{Re} polynomial estimate $R(m)$ based on Universal Exploration Sequences (UXS); see also \cite{AKLLR,CDK} for solutions not log-space constructible.
The situation improves significantly if each agent has a map of the graph with unlabeled nodes, labeled ports, and the agent's starting position marked. In this case, Depth-First-Search can be performed in time at most $2n-3$,  so $E$ can be taken as $2n-3$, which is the optimal 
exploration time in networks such as the star (a tree of diameter 2). However, for some graphs a better bound $E$ can be found. For example, if the graph has a Hamiltonian cycle, then $E$ can be taken as $n-1$. If the graph has an Eulerian cycle, then $E$ can be taken as $e-1$, where $e$ is the number of edges.
Next, suppose that each agent has a
port-labeled map, but without a marked starting position.  In this case, the agent identifies on the map a DFS traversal of the graph, starting from each node
and returning to the same node.
Each DFS is a sequence of length $2n-2$ of ports (we consider the port by which each node of the traversal should be exited). From its initial position, the
agent ``tries'' each DFS one after another.
In each attempt, the agent aborts the exploration if a prescribed port is not available at the current node, and returns to the starting node. One of the attempts correctly
visits all nodes, as it is a DFS corresponding to the actual starting node of the agent, so $E$
can be taken to be $n(2n-2)$. In our study we consider $E$ to be a parameter available to both agents, together with the corresponding exploration procedure,
regardless of the particular scenario and of the sharpness of this bound.  

As far as the memory of the agent is concerned, the most demanding part of our algorithms is the underlying graph exploration. Hence, the
way in which an exploration of time at most $E$ is performed has a decisive impact on the size of the memory required. If the agent knows only an upper bound $m$ on the size of the graph and relies on a UXS to make the exploration, then exploration requires only $O(\log m)$ bits of memory
(this is the main result of \cite{Re}) but the upper bound $E$ is then fairly large, i.e., a high-degree polynomial in $m$. If the agent is given as
input a DFS walk, coded as a sequence of port numbers, starting and ending at its starting node, then the memory required to record this walk is of size 
$O(n\log n)$, but the bound $E$ is then sharper. If, given a port-labeled map of the graph with a marked starting node, the agent has to discover an efficient exploration walk by itself, then recording this map is memory-consuming, i.e., up to $O(n^2\log n)$
bits. In particular cases, e.g.,  when the underlying graph is a ring of size $n$, only $\lceil \log n \rceil$ bits of memory are needed to record $n$,
and $E$ can be made as tight as possible, i.e., $n-1$.    
However, regardless of the scenario used to organize exploration, the rest of our algorithms does not require much memory: as will be seen,  it is enough to have simple counters that can be implemented with $O(\log E +\log L)$ memory bits.

\subsection{Our Results}
\label{subsec:ourresults}

First, recall that the cost of every
rendezvous algorithm is at least $E$ and the time is at least $\Omega (E\log L)$, even for the class of rings \cite{DFKP} (for which $E=n-1$).
We present two natural rendezvous algorithms that achieve optimal cost and time, respectively, up to multiplicative constants. 
Algorithm {\tt Cheap} has cost $O(E)$
and time $O(EL)$. Algorithm {\tt Fast} has both time and cost $O(E\log L)$. 
These algorithms work for arbitrary connected graphs and arbitrary starting times of the agents.
In fact, a version of Algorithm {\tt Cheap} has cost exactly $E$ for  
 the model where the agents start simultaneously.
Our main contributions are lower bounds showing that,
perhaps surprisingly, these two algorithms achieve nearly optimal tradeoffs between the time and cost of rendezvous. 
These lower bounds  hold even in a scenario very favourable for potential rendezvous algorithms, i.e., for oriented rings of known size and with simultaneous start.
We show that any deterministic rendezvous algorithm  with time complexity $O(E\log L)$ must have cost $\Omega (E\log L)$.
Hence,  if we want to be as fast as {\tt Fast}, we cannot be cheaper.
On the other hand,
we show that
any deterministic rendezvous algorithm of cost asymptotically $E$ (i.e., of cost $E+o(E)$) must have time $\Omega(EL)$.
Hence, in the model with simultaneous start, if we want to be as cheap as {\tt Cheap}, we cannot be faster. 

It is natural to ask if it is possible to solve rendezvous both at cost $o(E\log L)$, i.e., beating the cost of Algorithm {\tt Fast},
and in time $o(EL)$, i.e., beating the time of Algorithm {\tt Cheap}. It turns out that the answer to this question is ``yes''. 
Indeed, we provide an algorithm called {\tt FastWithRelabeling} that works at cost $O(E)$ and in time $o(EL)$. Moreover,  this shows a separation between the time necessary to solve 
rendezvous  at cost asymptotically $E$, i.e., at cost $E+o(E)$, and the time sufficient to solve 
rendezvous  at cost $\Theta(E)$. In the first case, the lower bound $\Omega(EL)$ on time holds, while in the second it does not.


\subsection{Related Work}
\label{subsec:relatwork}

Exploration and rendezvous are the two main tasks accomplished by mobile agents in networks modeled as graphs.
Algorithms for graph exploration by mobile agents (often called robots) have been
intensely studied in recent literature. A lot of  research is concerned with the case of a
single agent exploring a labeled graph.  In \cite{AH,BFRSV,BS,DP,FT} the
agent explores strongly-connected directed graphs. In a directed graph, an agent can move only
in the direction from tail to head of a directed edge, not vice-versa.  In
particular, \cite{DP} investigates the minimum time of exploration of
directed graphs, and \cite{AH,FT} give improved algorithms for this
problem in terms of the deficiency of the graph (i.e., the minimum
number of directed edges to be added to make the graph Eulerian).  Many papers,
e.g., \cite{GPRZ,DePe,DKK,PaPe} study the scenario where the
explored graph is labeled and undirected, and the agent can traverse edges in both
directions.  In
\cite{PaPe}, it is shown that a graph with $n$ nodes and $e$ edges can
be explored in time $e+O(n)$.  In some papers, additional restrictions
on the moves of the agent are imposed.  It is assumed that the agent
has either a restricted tank \cite{ABRS,BRS2}, forcing it to
periodically return to the base for refueling, or that it is tethered,
i.e., attached to the base by a rope or cable of restricted length
\cite{DKK}.
In \cite{DePe}, the authors investigate the problem of how
the availability of a map influences the efficiency of exploration.
In \cite{AKLLR}, the authors proved the existence of a polynomial-time deterministic exploration for all graphs with a given bound on size.
In \cite{Re}, a log-space construction of such an exploration was shown.
 
In all the above papers, except \cite{BS}, 
exploration is performed by
a single agent.  Deterministic exploration by many agents has been investigated
mostly in the context when the moves of the agents are centrally
coordinated.  In \cite{FHK}, approximation algorithms are given for
the collective exploration problem in arbitrary graphs. In
\cite{AB1,AB2}, the authors construct approximation algorithms for the
collective exploration problem in weighted trees. On the other hand,
in \cite{FGKP}, the authors study the problem of distributed collective
exploration of trees of unknown topology. In \cite{DieuPe}, exploration of arbitrary networks by many anonymous agents is investigated,
while in \cite{DDKPU}, this task is studied for labeled agents and labeled nodes.

The problem of rendezvous has been studied both under randomized and deterministic scenarios.
An extensive survey of  randomized rendezvous in various models  can be found in
\cite{alpern02b}, cf. also  \cite{alpern95a,alpern02a,anderson90,baston98,israeli}. 
Deterministic rendezvous in networks has been surveyed in \cite{Pe}.
Several authors
considered geometric scenarios (rendezvous in an interval of the real line, e.g.,  \cite{baston98,baston01},
or in the plane, e.g., \cite{anderson98a,anderson98b}).
Gathering more than two agents was studied, e.g., in \cite{fpsw,israeli,lim96,thomas92}.

For the deterministic setting many authors studied the feasibility and time complexity of rendezvous. For instance, deterministic rendezvous of agents equipped with tokens used to mark nodes was considered, e.g., in~\cite{KKSS}. 
Deterministic rendezvous in rings by labeled agents, without the ability to mark nodes, was investigated, e.g., in \cite{DFKP,KM}. In \cite{DFKP}, the authors gave tight upper and lower bounds
of $\Theta (D\log \ell)$ on the 
time of rendezvous when agents start simultaneously, where $D$ is the initial distance between agents and $\ell$ is the smaller label. 
They also gave a lower bound of $\Omega(n+D\log \ell)$ on the time of rendezvous with arbitrary delay between the agents' starting times in $n$-node rings.  In \cite{KM} an upper bound $O(n\log \ell)$
on the time of rendezvous was given, even without knowledge of $n$.
Most relevant to our work are the results about 
deterministic rendezvous in arbitrary graphs, when the two agents cannot mark nodes, but have unique labels  \cite{DFKP,KM,TSZ07}.
In \cite{DFKP}, the authors present a rendezvous algorithm whose running time is polynomial in the size of the graph, in the length of the shorter
label and in the delay between the starting times of the agents. In \cite{KM,TSZ07}, rendezvous time is polynomial in the first two of these parameters and independent of the delay between the starting times.

Memory required by the agents to achieve deterministic rendezvous was studied in \cite{FP2} for trees and in  \cite{CKP} for general graphs.
Memory needed for randomized rendezvous in the ring is discussed, e.g., in~\cite{KKPM08}. 

Apart from the synchronous model used in this paper, several authors investigated asynchronous rendezvous in the plane \cite{CFPS,fpsw} and in network environments
\cite{BCGIL,CLP,DGKKP,DPV}.
In the latter scenario, the agent chooses the edge to traverse, but the adversary controls the speed of the agent. Under this assumption, rendezvous
at a node cannot be guaranteed even in very simple graphs. Hence the rendezvous requirement is relaxed to permit the agents to meet inside an edge.

\section{Algorithms}

In this section we present three rendezvous algorithms: Algorithm {\tt Cheap}, Algorithm {\tt Fast}, and Algorithm {\tt FastWithRelabeling}$(s)$ for any function $s(L)\leq L$.
In each case, we first describe the algorithm in the easier case of simultaneous start, give a general formulation for arbitrary starting times of the agents, prove its correctness, and establish its time and cost complexities.

Assume that each agent $X$ is given a distinct label $\ell_X$ from the set $\{1,\ldots,L\}$. Let \texttt{EXPLORE} be a procedure that, for every possible starting node, takes $E$ rounds to perform an exploration of the entire input graph. If the exploration is completed earlier, the agent waits after finishing it until a total of $E$ rounds have elapsed. Upon meeting, both agents stop.

We start with the description of a version of Algorithm {\tt Cheap} for the model where the agents start simultaneously. Agent $X$ waits $(\ell _X-1)E$ rounds and then explores the graph once.

To see why this works, assume, without loss of generality, that $\ell_A < \ell_B$. Then, agent $B$ waits at its starting node in rounds $\{1,\ldots,(\ell_B-1)E\} \supseteq \{1,\ldots,\ell_AE\}$, and agent $A$ explores the entire graph in rounds $\{(\ell_A-1)E+1,\ldots,\ell_AE\}$. Therefore, agent $A$ meets agent $B$ at its starting node by round $\ell_AE$. Thus,  rendezvous is achieved in at most $\ell E$ rounds, where $\ell$ is the smaller label.
In the worst case this is $(L-1)E$. Since at most one exploration is performed, the cost is at most $E$.

In the general case of arbitrary starting times of the agents, Algorithm {\tt Cheap} is described as follows.

\begin{algorithm}[H]
\caption{\texttt{Cheap($\ell$,\texttt{EXPLORE})}}
\begin{algorithmic}[1]
\STATE \textrm{Execute \texttt{EXPLORE} once}
\STATE \textrm{Wait $2\ell E$ rounds}
\STATE \textrm{Execute \texttt{EXPLORE} once}
\end{algorithmic}
\end{algorithm}

\begin{proposition}\label{cheap}
Algorithm {\tt Cheap} completes rendezvous with cost at most $3E$ and in time at most $(2L+1)E$.
\end{proposition}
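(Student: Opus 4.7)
The plan is to verify the two bounds by case analysis on the wake-up order and on the magnitude of the delay $\delta := |t_A - t_B|$. After renaming agents if necessary, assume $\ell_A < \ell_B$. Each agent $X$ occupies $(2\ell_X + 2)E$ consecutive rounds from its wake-up $t_X$, with two EXPLORE-intervals $I_X^{(1)} = [t_X, t_X + E - 1]$ and $I_X^{(2)} = [t_X + (2\ell_X + 1)E, t_X + (2\ell_X + 2)E - 1]$ separated by a stationary interval $S_X = [t_X + E, t_X + (2\ell_X + 1)E - 1]$ of length $2\ell_X E$. The central idea is that rendezvous is guaranteed whenever one agent performs an entire EXPLORE while the other is confined to a single node --- either stationary between its two EXPLOREs, or still at its starting node before its wake-up --- since EXPLORE visits every node of the graph within $E$ rounds.

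I would split at the threshold $\delta = E$. In the large-delay regime $\delta \geq E$, the entire interval $I^{(1)}$ of the earlier-waking agent precedes the wake-up of the later agent, so the exploring agent visits the not-yet-woken agent at its starting node during these $E$ rounds. This yields cost at most $E$ and time at most $E - 1$ measured from the earlier wake-up, both well within the required bounds.

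In the small-delay regime $\delta < E$, I would show that $I_A^{(2)} \subseteq S_B$. The left inequality $t_B + E \leq t_A + (2\ell_A + 1)E$ follows from $t_B - t_A \leq \delta < E \leq 2\ell_A E$ (using $\ell_A \geq 1$), and the right inequality $t_A + (2\ell_A + 2)E \leq t_B + (2\ell_B + 1)E$ follows from $t_A - t_B \leq \delta < E \leq (2\ell_B - 2\ell_A - 1)E$ (using $\ell_B \geq \ell_A + 1$). Hence agent $A$'s second EXPLORE runs while agent $B$ is stationary, and they meet by round $t_A + (2\ell_A + 2)E - 1$. At that moment agent $A$ has performed one complete plus one partial EXPLORE ($\leq 2E$ traversals) and agent $B$ has performed exactly one complete EXPLORE ($E$ traversals, and zero moves during its stationary interval), giving total cost at most $3E$. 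Measured from $\min(t_A, t_B)$, the elapsed time is at most $\delta + (2\ell_A + 2)E - 1 \leq (E - 1) + 2LE - 1 = (2L+1)E - 2 < (2L+1)E$.

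The only delicate point is the right-end containment inequality, where the slack $2\ell_B - 2\ell_A - 1 \geq 1$ relies on the strict label separation $\ell_B \geq \ell_A + 1$; this is precisely where the distinctness of labels is used to break symmetry. The rest is a routine comparison of wake-up-offset intervals together with a direct count of edge traversals contributed by each agent up to the rendezvous round.
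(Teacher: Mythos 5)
Your proof is correct and follows essentially the same approach as the paper: handle a delay of at least $E$ via the earlier agent's first exploration, and otherwise show that the smaller-labeled agent's second exploration is contained in the larger-labeled agent's waiting interval. The only difference is organizational --- you normalize by label order and absorb the paper's two label-order cases into a single interval-containment check with a signed delay --- which is a slightly cleaner packaging of the identical argument.
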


\begin{proof}
Suppose that agent $A$ starts its execution in round 1 and that agent $B$ starts its execution in round $\tau$ for some $\tau \geq 1$. From the algorithm's specification, we can deduce the following:
\begin{itemize}
\item Agent $A$'s first exploration (i.e., Line 1) starts in round $1$ and ends in round $E$, its waiting period (i.e., Line 2) starts in round $E+1$ and ends in round $(2\ell_A+1)E$, and its second exploration (i.e., Line 3) starts in round $(2\ell_A+1)E+1$ and ends in round $(2\ell_A+2)E$.
\item  Agent $B$'s first exploration starts in round $\tau$ and ends in round $\tau+E-1$, its waiting period starts in round $\tau+E$ and ends in round $\tau+(2\ell_B+1)E-1$, and its second exploration starts in round $\tau+(2\ell_B+1)E$ and ends in round $\tau+(2\ell_B+2)E-1$.
\end{itemize}
 
First, observe that, if $B$'s start is significantly delayed, then agent $A$ meets agent $B$ during agent $A$'s first exploration of the graph. Namely, if $\tau > E$, then the agents meet within the first $E$ rounds.

So, in what follows, we assume that $\tau \leq E$. Since $1 \leq \tau \leq E$, $B$'s second exploration occurs completely within the time segment $[(2\ell_B+1)E+1,\ldots,(2\ell_B+3)E-1]$.

If $\ell_A > \ell_B$, then $A$'s waiting period ends in round $(2\ell_A+1)E \geq (2(\ell_B+1)+1)E = (2\ell_B+3)E$. Also, note that $A$'s waiting period starts in round $E+1 \leq \ell_BE+1$. Therefore, agent $A$ is idle throughout the time segment $[\ell_BE+1,\ldots,(2\ell_B+3)E] \supseteq [(2\ell_B+1)E+1,\ldots,(2\ell_B+3)E-1]$. Hence, agent $B$ meets agent $A$ by round $(2\ell_B+3)E-1$.

If $\ell_B > \ell_A$, then $B$'s waiting period ends in round $\tau + (2\ell_B+1)E - 1 \geq \tau + (2\ell_A+3)E - 1$. Also, $B$'s waiting period starts in round $\tau + E \leq \tau + \ell_AE$. Since $1 \leq \tau \leq E$, $B$ is idle throughout the time segment $[\tau + \ell_AE,\ldots,\tau+(2\ell_A+3)E-1] \supseteq [(\ell_A+1)E,\ldots,(2\ell_A+3)E]$. However, $A$'s second exploration occurs during the time segment 
$[(2\ell_A+1)E+1,\ldots,(2\ell_A+2)E)] \subseteq [(\ell_A+1)E,\ldots,(2\ell_A+3)E]$. Hence, agent $A$ meets agent $B$ by round $(2\ell_A+2)E$.

Thus, Algorithm {\tt Cheap} completes rendezvous using at most $(2\ell+3)E$ rounds, where $\ell$ is the smaller label. In the worst case, this is $(2L+1)E$.
Since the meeting occurs before the start of the second exploration of the agent with the larger label, the total cost of the algorithm is at most $3E$.
\end{proof}

Next, in order to describe Algorithm {\tt Fast}, we recall the label transformation from \cite{DPV}. If $x=(c_1\cdots c_r)$ is the binary representation of the label $\ell$
of an agent, define the {\em modified label} of the agent to be the sequence $M(\ell)=(c_1c_1c_2c_2\cdots c_rc_r01)$.  
Note that, for any distinct $x$ and $y$, the sequence $M(x)$ is never a prefix of $M(y)$.
Also, $M(x) \neq M(y)$ if $x\neq y$. Since the (original) labels of the agents are different, there exists an index for which their transformed labels differ.
Note that if $z=1+\lfloor \log \ell \rfloor$ is the length of the binary representation of the label $\ell$ of the agent, then $m=2z +2$ is the length of its modified label.

We describe Algorithm {\tt Fast}, first in the case of simultaneous start. Suppose that $(b_1\cdots b_m)$ is the transformed label of an agent.   
In the time segment $[(i-1)E+1,iE]$, the agent executes $\texttt{EXPLORE}$ if $b_i=1$, and, otherwise, the agent stays idle.


To see why this works, consider any two agents $A$ and $B$, and let $S_A$ and $S_B$ denote their transformed labels, respectively. Consider the smallest index $j$ such that $S_A[j] \neq S_B[j]$. Without loss of generality, assume that $S_A[j] = 1$ and $S_B[j] = 0$. It follows that, during the time segment $[(j-1)E+1,\ldots,jE]$, agent $A$ explores the entire graph while $B$ is idle. Therefore, agent $A$ meets agent $B$ by round $jE$. Hence the worst possible time is $(2 \lfloor \log (L-1)\rfloor+4)E=O(E\log L)$.
The cost is bounded above by twice the time, hence it is also $O(E\log L)$. 

In the general case of arbitrary starting times Algorithm {\tt Fast} is described as follows.

\begin{algorithm}[H]
\caption{\texttt{Fast($\ell$,\texttt{EXPLORE})}}
\begin{algorithmic}[1]
\STATE $S[1 \ldots m] \leftarrow M(\ell)$
\STATE $T[1 \ldots 2m+1] \leftarrow (1,S[1],S[1],S[2],S[2],\ldots,S[m],S[m])$ 
\FOR{$i = 1$ to $2m+1$}
	\IF{$(T[i] = 1)$}
		\STATE{\textrm{execute \texttt{EXPLORE} once}}
	\ELSE
		\STATE{\textrm{wait $E$ rounds}}
	\ENDIF
\ENDFOR
\end{algorithmic}
\end{algorithm}

\begin{proposition}\label{fast}
Algorithm {\tt Fast} completes rendezvous with cost at most $(8\log{(L-1)}+18)E$ and in time at most $(4\log{(L-1)}+9)E$.
\end{proposition}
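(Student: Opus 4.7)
The plan is to fix start times, say agent $A$ in round $1$ and agent $B$ in round $\tau \geq 1$, and to argue by case analysis on $\tau$. The structural facts I would rely on are (i) the first entry of $T$ is always $1$, so each agent's very first $E$ rounds constitute a complete {\tt EXPLORE}, and (ii) the remaining entries of $T$ are the bits of $M(\ell)$ doubled, so $T[2k] = T[2k+1] = S[k]$ for $k = 1, \ldots, m$.

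Case~1 is $\tau > E$: agent $A$'s opening {\tt EXPLORE} during rounds $[1, E]$ traverses every node of the graph while $B$ is still dormant at its starting node, so the rendezvous occurs by round $E$ at cost at most $E$. Case~2 is $\tau \leq E$. Here I use the property, recalled in the excerpt, that $M(\ell_A)$ is not a prefix of $M(\ell_B)$ to pick the smallest index $j$ at which $S_A[j] \neq S_B[j]$; without loss of generality $S_A[j] = 1$ and $S_B[j] = 0$. By (ii), agent $A$ performs two consecutive explorations in its phases $2j$ and $2j+1$, spanning rounds $[(2j-1)E+1,\,(2j+1)E]$, while agent $B$ is idle throughout its own phases $2j$ and $2j+1$, spanning rounds $[\tau+(2j-1)E,\,\tau+(2j+1)E-1]$.

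A direct endpoint check shows that whenever $1 \leq \tau \leq E$, agent $A$'s second exploration window $[2jE+1,\,(2j+1)E]$ is fully contained in agent $B$'s double-idle window, so during that single {\tt EXPLORE} agent $A$ must visit agent $B$'s (fixed) location, and rendezvous occurs by round $(2j+1)E$. To convert this into the stated numerical bounds, I use that $j$ is at most the length of the shorter of the two modified labels, i.e., at most $2\lfloor \log \ell_{\min} \rfloor + 4 \leq 2\log(L-1) + 4$. This yields time at most $(2j+1)E \leq (4\log(L-1)+9)E$ in both cases. The total cost is then bounded by observing that each agent executes at most $2j+1$ phases before the meeting, each contributing either $E$ or $0$ edge traversals, giving $2(2j+1)E \leq (8\log(L-1)+18)E$.

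The step I expect to require the most care is the endpoint verification in Case~2: despite the arbitrary phase offset $\tau$, the doubled bits of $T$ must guarantee that a complete $E$-round {\tt EXPLORE} of one agent lands entirely inside a (double-length) idle window of the other. This is precisely the role played by doubling every bit of $M(\ell)$, and it is the only place where the assumption $\tau \leq E$ (the complement of Case~1) is actually exploited.
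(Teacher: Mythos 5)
Your proposal follows essentially the same route as the paper's proof: dispose of the case $\tau>E$ via agent $A$'s opening exploration, then locate the first index $j$ where the modified labels differ and use the bit-doubling to fit one agent's full \texttt{EXPLORE} inside the other's two-phase idle window, concluding rendezvous by round $(2j+1)E$ with the same numerical bounds. The endpoint check you single out as the delicate step is indeed the heart of the paper's argument, and your containment computation for the case $S_A[j]=1$ is correct.

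The one flaw is the ``without loss of generality'' reducing to $S_A[j]=1$, $S_B[j]=0$. The two agents are not interchangeable here: $A$ is by definition the agent that starts in round $1$ and $B$ the one delayed by $\tau$, so swapping which agent carries the $1$-bit changes the arithmetic. In your case (early agent explores, late agent idles), it is the early agent's \emph{second} exploration of the pair, $[2jE+1,(2j+1)E]$, that lands inside the late agent's idle window $[\tau+(2j-1)E,\tau+(2j+1)E-1]$. In the omitted case (late agent explores, early agent idles), one must instead check that the late agent's \emph{first} exploration of the pair, $[\tau+(2j-1)E,\tau+2jE-1]$, lands inside the early agent's idle window $[(2j-1)E+1,(2j+1)E]$; this also holds for $1\le\tau\le E$, but it is a different inequality, and the paper writes out both cases explicitly. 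The gap is easily repaired, but as written the symmetry you invoke does not exist.
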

\begin{proof}
As the cost is bounded above by twice the time, it is sufficient to analyze time.
Consider any two agents $A$ and $B$. For each agent $X$, let $S_X=M(\ell_X)$, and let $m$ be the length of $S_X$. Let $T_X$ be the string of length $2m+1$ such that $T_X[1]=1$, and, for each $i \in \{2,\ldots,m\}$, $T_X[2i] = T_X[2i+1] = S_X[i]$.

Suppose that agent $A$ starts its execution in round 1 and that agent $B$ starts its execution in round $\tau$ for some $\tau \geq 1$. First, observe that, if $B$'s start is significantly delayed, then agent $A$ meets agent $B$ during agent $A$'s first exploration of the graph. Namely, if $\tau > E$, then the agents meet within the first $E$ rounds. So, in what follows, we assume that $\tau \leq E$. Consider the smallest $j$ such that $S_A[j] \neq S_B[j]$.

First, suppose that $S_A[j] = 0$. It follows that $T_A[2j] = T_A[2j+1] = 0$, so $A$ is idle during the time segment $[(2j-1)E+1,\ldots,(2j+1)E]$. Also, $T_B[2j] =1$, so $B$ performs procedure {\tt EXPLORE} starting in round $(2j-1)E+\tau+1$ and ending in round $2jE+\tau$. Since $0 \leq \tau \leq E$, this execution of {\tt EXPLORE} is completely contained in the time segment $[(2j-1)E+1,\ldots,(2j+1)E]$. Therefore, $B$ meets $A$ by round $(2j+1)E$. 

Next, suppose that $S_A[j] = 1$. It follows that $T_B[2j] = T_B[2j+1] = 0$, so $B$ is idle during the time segment $[(2j-1)E + \tau +1,\ldots,(2j+1)E + \tau]$. Since $0 \leq \tau \leq E$, this interval contains the time segment $[2jE+1,\ldots,(2j+1)E]$. Also, $T_A[2j+1] = 1$, so $A$ performs  procedure {\tt EXPLORE} starting in round $2jE+1$ and ending in round $(2j+1)E$. Therefore, $A$ meets $B$ by round $(2j+1)E$.

Hence, the two agents meet by round $(2j+1)E$, and thus, the worst possible time is $(4 \lfloor \log (L-1)\rfloor+9)E \in O(E\log L)$.
\end{proof}

The worst-case cost of Algorithm {\tt Fast} occurs when the binary representation of an agent's label has large weight, i.e., has many 1's. We can reduce the cost if we relabel the agents in such a way that all labels have small weight. This motivates the following algorithm called {\tt FastWithRelabeling}.

For any function $w:\mathbb{N} \longrightarrow \mathbb{N}$ such that $w(L) \leq L$, we define Algorithm {\tt FastWithRelabeling}$(w)$ as follows. Let $t$ be the smallest positive integer such that $\binom{t}{w(L)} \geq L$. For any set 
$A \subset \{1,\ldots,t\}$, the characteristic function $\chi _A: \{1,\ldots,t\} \longrightarrow \{0,1\}$ is defined by $\chi _A (i)=1$ if and only if $i \in A$. Each characteristic function $\chi_A$ yields a $t$-bit binary string $s_A$ where the $i$'th bit of $s_A$ is equal to $\chi_A(i)$.
We say that a set $A \subset \{1,\ldots,t\}$ is lexicographically smaller than a set $B \subset \{1,\ldots,t\}$ if $s_A$ is lexicographically smaller than $s_B$. Each agent $X$ is assigned the lexicographically $\ell_X$-th smallest $w(L)$-subset of $\{1,\ldots,t\}$, and its new label $\ell_X'$ is taken to be the $t$-bit binary string corresponding to the characteristic function of this set. Then, Algorithm {\tt Fast} is executed with the new labels.

\begin{proposition}\label{fastwithrelabeling}
Algorithm {\tt FastWithRelabeling}$(w)$ completes rendezvous with cost at most $(2\cdot w(L))E$ and in time at most $(4t+ 5)E$, where $t$ is the smallest positive integer such that $\binom{t}{w(L)} \geq L$.
\end{proposition}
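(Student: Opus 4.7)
The plan is to break the proof into three stages. First, I would verify that the relabeling step is well-defined: by the definition of $t$, there are $\binom{t}{w(L)} \geq L$ distinct $w(L)$-subsets of $\{1,\ldots,t\}$, so the rule that assigns to agent $X$ the lexicographically $\ell_X$-th smallest such subset is an injection from $\{1,\ldots,L\}$ into the set of $t$-bit binary strings of Hamming weight exactly $w(L)$. In particular, the new labels $\ell_A'$ and $\ell_B'$ of the two agents are distinct $t$-bit binary strings, each of weight $w(L)$, and both agents can compute this relabeling from the public parameters $L$ and $w(L)$ together with their own original label.

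Next, for the time bound, I would simply invoke the analysis of Algorithm \texttt{Fast} (Proposition \ref{fast}) applied to the new labels, noting that the only change is the effective label length. Since each new label is a $t$-bit string, the modified label $M(\ell_X')$ has length $m = 2t+2$, so the control array $T_X$ has length $2m+1 = 4t+5$. Because the new labels are distinct, so are their modified labels, and the proof of Proposition \ref{fast} shows that rendezvous occurs by round $(2j+1)E$, where $j \leq m$ is the first index at which $M(\ell_A')$ and $M(\ell_B')$ disagree. This gives the desired $(4t+5)E$ bound on time.

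For the cost bound, I would exploit the low Hamming weight guaranteed by the relabeling. Since $\ell_X'$ has exactly $w(L)$ ones, the modified label $M(\ell_X')$ has exactly $2w(L)+1$ ones, and so there are $O(w(L))$ indices $i$ at which $T_X[i]=1$; each such iteration contributes at most $E$ edge traversals. The main refinement that I expect to need is to truncate this count at the meeting round: by the doubling inside $M$, the first differing position $j$ of the modified labels corresponds to the first position $k$ at which $\ell_A'$ and $\ell_B'$ disagree, so only $T_X$-positions lying in the shared prefix $\ell_A'[1..k-1]=\ell_B'[1..k-1]$ (together with the initial $T[1]=1$ and the double-bit iteration at position $k$) have fired before the agents halt. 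The weight of this shared prefix is at most $w(L)$ for each agent, and this is where the main obstacle lies: obtaining the stated constant requires a careful case analysis distinguishing which of the two agents has a $1$ at position $k$, and therefore which of them actually performs \texttt{EXPLORE} during the iterations straddling the meeting, while using that both agents stop simultaneously upon rendezvous.
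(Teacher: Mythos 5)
Your proposal follows essentially the same route as the paper's proof: injectivity of the relabeling from $\binom{t}{w(L)}\geq L$, the time bound by rerunning the analysis of Proposition~\ref{fast} with labels of fixed length $t$ (so that $T_X$ has length $4t+5$), and the cost bound by counting the executions of \texttt{EXPLORE}, i.e., the $1$'s in the control vector. The one point where you go beyond the paper is precisely where your suspicion is justified. The paper's entire cost argument is the sentence ``each label has exactly $w(L)$ 1's, so the combined cost incurred by the two agents is at most $(2\cdot w(L))E$,'' which tacitly charges one exploration per $1$ of the new label $\ell_X'$ and ignores that Algorithm \texttt{Fast} is actually driven by $T_X$, in which every bit of $\ell_X'$ appears four times (doubled once by $M$ and once more in forming $T_X$), on top of the leading $1$ of $T_X$ and the doubled suffix $01$ of $M(\ell_X')$. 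A full run of $T_X$ therefore contains $4w(L)+3$ ones, and the truncation at the meeting round that you sketch cannot recover a factor of $4$: if $k$ is the first position where $\ell_A'$ and $\ell_B'$ differ, the shared prefix can carry weight up to $w(L)-1$, so even counting only the blocks up to the guaranteed meeting block one gets roughly $4w(L)$ explorations per agent, i.e., a combined bound of order $8w(L)\cdot E$, not $2w(L)\cdot E$. So the case analysis you anticipate would not reach the stated constant; neither does the paper's argument. What both arguments do establish is cost $O(w(L))\cdot E$ (say $(8w(L)+6)E$), and that is all that is used in Corollary~\ref{constantweight}, so the asymptotic conclusions are unaffected. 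I would simply state the constant you can prove rather than chase $(2\cdot w(L))E$.
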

\begin{proof}
We note that, for two distinct agents $A$ and $B$, we have $\ell_A' \neq \ell_B'$. This is because $\ell_A \neq \ell_B$, and, by the choice of $t$, there are at least $L$ subsets of $\{1,\ldots,t\}$ of size $w(L)$, so $A$ and $B$ are assigned distinct subsets of $\{1,\ldots,t\}$. Using the same proof of correctness and worst-case time analysis as Algorithm {\tt Fast}, with labels of fixed length $t$ instead of length at most $1+\log{(L-1)}$, it follows that Algorithm {\tt FastWithRelabeling} correctly solves rendezvous in time at most $(4t+5)E$. To analyze the cost, we note that each label has exactly $w(L)$ 1's, so the combined cost incurred by the two agents is at most $(2\cdot w(L))E$.
\end{proof}

The following corollary shows that Algorithm {\tt FastWithRelabeling}$(w)$, for constant functions $w(L)=c$ where $c>1$, solves rendezvous
at cost $O(E)$ and in time $o(EL)$.

\begin{corollary}\label{constantweight}
For any positive integer function $w \in O(1)$, Algorithm {\tt FastWithRelabeling}$(w)$ works with cost $O(E)$ and in time $O(L^{1/w(L)}E)$.
\end{corollary}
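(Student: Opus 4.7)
The plan is to invoke Proposition \ref{fastwithrelabeling} and bound the parameter $t$ (the smallest positive integer with $\binom{t}{w(L)} \geq L$) in terms of $L$ and $w(L)$, then specialize using the fact that $w(L)$ is bounded by a constant.

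First, the cost side is immediate: since $w \in O(1)$, there is a constant $c$ with $w(L) \le c$ for all $L$, and Proposition \ref{fastwithrelabeling} gives cost at most $2 w(L) E \le 2cE = O(E)$.

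For the time bound $(4t+5)E$, the main task is to show $t = O(L^{1/w(L)})$. I plan to use the elementary inequality
\[
\binom{t}{k} \;\ge\; \frac{(t-k+1)^k}{k!},
\]
which follows because each of the $k$ factors in the numerator of $\binom{t}{k}$ is at least $t-k+1$. To force $\binom{t}{w(L)} \ge L$, it therefore suffices to take $t$ with $(t - w(L) + 1)^{w(L)} \ge L \cdot w(L)!$, that is,
\[
t \;\ge\; \bigl(L \cdot w(L)!\bigr)^{1/w(L)} + w(L) - 1.
\]
Hence the minimal $t$ appearing in Proposition \ref{fastwithrelabeling} satisfies $t \le \lceil (L \cdot w(L)!)^{1/w(L)} \rceil + w(L)$.

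Finally, since $w(L) \le c$, we have $w(L)! \le c!$, so $(w(L)!)^{1/w(L)} = O(1)$ and the additive term $w(L) = O(1)$. Combining these estimates gives $t = O(L^{1/w(L)})$, and Proposition \ref{fastwithrelabeling} then yields time $(4t+5)E = O(L^{1/w(L)}E)$, completing the argument. The derivation is essentially mechanical; the only point requiring a small amount of care is choosing a lower bound on $\binom{t}{w(L)}$ that inverts to a clean closed-form upper bound on $t$, but the estimate $\binom{t}{k} \ge (t-k+1)^k / k!$ is more than sufficient for this purpose.
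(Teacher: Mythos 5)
Your proof is correct and follows essentially the same route as the paper: both invoke Proposition \ref{fastwithrelabeling} and bound the minimal $t$ by $O(L^{1/w(L)})$ via an elementary lower bound on the binomial coefficient (the paper exhibits $t'=c\cdot L^{1/c}$ and uses $\binom{n}{k}\ge (n/k)^k$, whereas you invert $\binom{t}{k}\ge (t-k+1)^k/k!$ — an immaterial difference). If anything, your version is slightly more careful in handling non-constant functions $w$ with $w(L)\le c$, while the paper's proof just sets $w(L)=c$.
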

\begin{proof}
Let $w(L) = c$ for some positive constant integer $c$. Let $t' = c \cdot L^{1/c}$. Then $\binom{t'}{w(L)} = \binom{c \cdot L^{1/c}}{c} \geq \left(\frac{c\cdot L^{1/c}}{c}\right)^c = L$. Therefore, $t \leq t' = c \cdot L^{1/c}$. By Proposition \ref{fastwithrelabeling}, the worst-case time of Algorithm {\tt FastWithRelabeling}$(w)$ is at most $(4c\cdot L^{1/c} + 5)E \in O(L^{1/w(L)}E)$, and the worst-case cost is at most $2cE \in O(E)$.
\end{proof}

\section{Lower Bounds}

In order to make our lower bounds as strong as possible, we show that they hold even in a very restricted situation: when the underlying graph is particularly simple and the agents have full knowledge of it. A ring is {\em oriented} if every edge has port labels 0 and 1 at the two end-points.
Such a port labeling induces orientation of the ring: at each node, we will say that taking port 0 is going clockwise and taking port 1 is going counterclockwise.
Throughout this section, we assume that agents operate in an oriented ring of size $n$ known to the agents. Hence, in this case, $E$
is taken as $n-1$: starting from any node an agent can explore the ring going $n-1$ steps clockwise.
This is, of course, an optimal exploration.  Moreover, we assume that both agents start simultaneously, i.e., their clock values are equal in each round.
Even in this scenario, which is very favourable to potential rendezvous algorithms, we establish lower bounds proving that
our algorithms {\tt Cheap} and {\tt Fast} capture the time vs. cost tradeoffs for rendezvous almost tightly. 

In our lower bound proofs, we use the following terminology. For simplicity, an agent with label $x$ will be called agent $x$. 
Consider a rendezvous algorithm $\mathcal{A}$.
Consider two arbitrary agents $x,y$ and two arbitrary nodes $p_x,p_y$ in the oriented ring of size $n$. We denote by $\alpha(x,p_x,y,p_y)$ the execution of algorithm
$\mathcal{A}$  in which $x$ starts at node $p_x$ and $y$ starts at node $p_y$. The final round of $\alpha(x,p_x,y,p_y)$, denoted by $|\alpha(x,p_x,y,p_y)|$, is the first round in which $x$ and $y$ meet. In a slight abuse of notation, we denote by $\alpha(x,p_x,\bot,\bot)$ the {\em solo} execution of $\mathcal{A}$, i.e., when $x$ executes the algorithm alone, starting at node $p_x$. Note that the behaviour of agent $x$ in an execution $\alpha(x,p_x,y,p_y)$ is the same
as its behaviour in execution $\alpha(x,p_x,\bot,\bot)$ until round $|\alpha(x,p_x,y,p_y)|$.

For each label $x \in \{1,\ldots,L\}$, algorithm $\mathcal{A}$ specifies a \emph{behaviour vector} $V_x$. In particular, $V_x$ is a sequence with terms from $\{-1,0,1\}$ that specifies, for each round $i$ of the solo execution of agent $x$, whether agent  $x$ moves clockwise (denoted by $1$), remains idle (denoted by $0$), or moves counter-clockwise (denoted by $-1$). Note that an agent's behaviour vector is independent of its starting position, since an agent cannot determine where on the ring it is initially positioned.  

We now describe a procedure {\tt Trim}($\mathcal{A}$) which modifies the behaviour vectors specified by $\mathcal{A}$. At a high level, we are zeroing the entries that the algorithm never uses so that, if we show the existence of a non-zero entry in round number $i$ of some behaviour vector, then there is an execution of the algorithm that takes at least $i$ rounds. Specifically, for each $x \in \{1,\ldots,L\}$:
\begin{enumerate}
\item Find the maximum value of $|\alpha(x,p_x,y,p_y)|$, taken over all $y \in \{1,\ldots,L\} \setminus \{x\}$ and nodes $p_x,p_y$. Denote this maximum by $m_x$.
\item For all $j>m_x$, set $V_x[j]=0$.
\end{enumerate}
Note that this does not change any non-solo execution of $\mathcal{A}$: any modified entry in $V_x$ corresponds to a round that occurs after $x$ has met with any other agent. Also, after performing this trimming operation, for any non-zero entry $V_x[i]$, there exists an agent $y$ and there exist starting positions for $x$ and $y$ such that $x$ and $y$ have not met by round $i$ and agent $x$ moves during round $i$. We obtain lower bounds on the running time (or cost) of $\mathcal{A}$ by proving lower bounds on the length (or weight) of behaviour vectors resulting from procedure  {\tt Trim}($\mathcal{A}$).

Our first lower bound shows that no rendezvous algorithm of cost asymptotically $E$ (i.e., of cost $E +o(E)$), can beat the time $\Theta(EL)$
of Algorithm {\tt Cheap}. (Recall that Algorithm {\tt Cheap} always has cost $O(E)$ and it has cost exactly $E$ in a model with simultaneous start.)

\begin{theorem}
Any deterministic rendezvous algorithm of cost $E +o(E)$ must have time $\Omega(EL)$.
\end{theorem}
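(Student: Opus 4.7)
I would prove this by contradiction: assume the algorithm $\mathcal{A}$ has worst-case cost at most $E + g(n)$ with $g(n) = o(E)$, and let $T$ denote its worst-case time; I will derive $T = \Omega(EL)$. After applying procedure {\tt Trim}, each label $x$ has a behavior vector $V_x \in \{-1,0,1\}^T$ (extended by zeros past its support $[1, m_x]$). For each pair $(x, y)$, define the lifted position $\phi_x(k) = \sum_{i \leq k} V_x(i) \in \mathbb{Z}$ and the difference walk $\tilde D_{x, y}(k) = \phi_x(k) - \phi_y(k)$. Rendezvous at every initial offset $p \in \{1, \dots, n-1\}$ forces $\tilde D_{x,y} \bmod n$ to visit every residue by round $\tau^*_{x,y}$; on the integer lift this means the range of $\tilde D_{x,y}$ over $[0, \tau^*_{x,y}]$ is at least $E$. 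Meanwhile the cost bound gives $\sum_{k \leq \tau^*_{x,y}} |V_x(k) - V_y(k)| \leq c_x(\tau^*) + c_y(\tau^*) \leq E + g$.

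The next step is a structural claim: each pair's walk is \emph{mostly monotonic}. The key observation is that a walk starting at $0$ and visiting both a maximum $U \geq 0$ and a minimum $-\ell \leq 0$ with $U + \ell \geq E$ requires at least $E + \min(U, \ell)$ signed displacement units; combined with the cost budget of $E + g$, this forces $\min(U, \ell) \leq g$. Hence the walk lies either in $[-g,\, E+g]$ (``mostly clockwise'') or in $[-(E+g),\, g]$ (``mostly counterclockwise''). Writing $x \succ y$ when pair $(x,y)$ is mostly clockwise, I would show that $\succ$ is a total order on $\{1, \dots, L\}$: antisymmetry and totality are immediate, while transitivity follows because if $x \succ y$ and $y \succ z$, then $\tilde D_{x,z} = \tilde D_{x,y} + \tilde D_{y,z}$ evaluated at the round where $\tilde D_{x, y}$ attains its maximum is at least $(E - g) + (-g) > 0$, ruling out the mostly-counterclockwise option for $(x, z)$.

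Order the labels by $\succ$ as $\sigma_1 \succ \sigma_2 \succ \dots \succ \sigma_L$ and set $G_j(k) = \tilde D_{\sigma_j, \sigma_{j+1}}(k)$. Each $G_j$ is Lipschitz with constant $2$ (since $V_x(k) \in \{-1, 0, 1\}$), satisfies $G_j(0) = 0$, reaches a peak value $\geq E - g$ at some round $t_j^* \leq \tau^*_{\sigma_j, \sigma_{j+1}} \leq T$, and stays $\geq -g$ throughout $[0, \tau^*_{\sigma_j, \sigma_{j+1}}]$. Sub-sums telescope: $\sum_{\ell = i}^{j-1} G_\ell(k) = \tilde D_{\sigma_i, \sigma_j}(k) \leq E + g$ whenever $k \leq \tau^*_{\sigma_i, \sigma_j}$. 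Using these skip-pair bounds, I would argue that at each peak time $t_j^*$, every other $G_{j'}(t_j^*)$ must be only $O(g)$ -- otherwise the walk of a skip pair whose index interval contains both $j$ and $j'$ would exceed $E + g$ at round $t_j^*$. The Lipschitz-$2$ property then forces $|t_j^* - t_{j'}^*| \geq (E - O(g))/2$, so the $L-1$ peak times are pairwise $\Omega(E)$-separated, yielding $T \geq \max_j t_j^* \geq (L-1)\cdot \Omega(E) = \Omega(EL)$.

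The main obstacle lies in the peak-separation step: justifying that the skip-pair bound $\tilde D_{\sigma_i, \sigma_j}(k) \leq E + g$ can actually be invoked at the peak times $t_j^*$, since a pair's cost bound only constrains its walk up to that pair's own meeting time $\tau^*_{\sigma_i, \sigma_j}$, which may in principle be smaller than $t_j^*$ (consider how in the {\tt Cheap}-style behavior, skip pairs meet \emph{earlier} than adjacent ones). Overcoming this will require carefully ordering the $\tau^*$-times across the chain and, for each $t_j^*$, either selecting a skip pair whose meeting time dominates $t_j^*$ or exploiting the additional bound $|\phi_x(t)| \leq M_x \leq E + g$ to control the walk beyond $\tau^*$. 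The ``mostly monotonic'' claim in the second paragraph will also need some care with $\pm 2$ step sizes (arising when both agents move in opposite directions in the same round), but the resulting quantitative adjustments absorb into the $o(E)$ slack.
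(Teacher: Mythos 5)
Your overall strategy --- impose a linear order on the labels, attach to each consecutive pair a round by which a relative displacement of about $E$ must have accumulated, and show these rounds are $\Omega(E)$ apart --- parallels the paper's proof, which orders agents along a Hamiltonian path of an ``eagerness'' tournament and shows that the meeting rounds along the path increase by $\Omega(E)$ at each step. However, the obstacle you flag is a genuine gap, not a technicality: every quantitative bound you establish for a difference walk $\tilde D_{x,y}$ (range at least $E$, total variation at most $E+g$, hence the window $[-g,E+g]$) is valid only on $[0,\tau^*_{x,y}]$, because the cost hypothesis constrains an execution only up to its own meeting round. Yet the peak-separation step must evaluate skip-pair walks at the peak time $t_j^*$ of a \emph{different} pair, and the transitivity of $\succ$ must evaluate $\tilde D_{y,z}$ at the round where $\tilde D_{x,y}$ peaks; in both places that round may exceed the relevant $\tau^*$, beyond which the only available control is $|\tilde D|\le 2(E+g)$ (from $|\phi_x|\le E+g$), which is far too weak to give $G_{j'}(t_j^*)=O(g)$. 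Two further problems: bounding the intermediate terms of a skip sum one at a time by $-g$ yields only $G_{j'}(t_j^*)=O(Lg)$, which need not be $o(E)$ (you would have to treat the middle block as a single skip pair, again hitting the time-range issue); and you genuinely need \emph{pairwise} separation of all peaks, since consecutively separated peak times could oscillate and keep $\max_j t_j^*$ small.

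The paper closes exactly this gap by replacing the per-pair, time-limited bound with a per-agent bound valid at \emph{all} rounds: for each clockwise-heavy agent $A$, the total counterclockwise excursion $back(A)$ of its entire trimmed solo behaviour is at most $\varphi=o(E)$. This is proved not from a pair's total-variation budget but by a repositioning argument: if $back(A)>\varphi$, then in $A$'s longest execution the two agents together sweep fewer than $E$ edges, so the adversary can shift the other agent's starting node to make the two swept segments disjoint, and the agents never meet. With $disp(A,\cdot)\ge-\varphi$ holding unconditionally in time, the paper never needs an upper bound on a difference walk at a foreign time; instead it proves that along the Hamiltonian path the meeting rounds $|\alpha_i|$ are strictly increasing, and that agent $A_{i+1}$ has displacement at most $(F+\varphi)/2$ at round $|\alpha_i|$ (where it is not eager) but at least $F-\varphi$ at round $|\alpha_{i+1}|$ (where it is eager), forcing $|\alpha_{i+1}|-|\alpha_i|\ge(F-3\varphi)/2$. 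Note also that the paper does not need your relation $\succ$ to be transitive: a Hamiltonian path in the eagerness tournament suffices and always exists by R\'edei's theorem. To repair your argument you would need an analogue of the all-time per-agent bound, or a proof that the meeting times $\tau^*$ are monotone along your chain; without one of these the chaining step does not go through.
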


\begin{proof}
Let $\mathcal{A}$ be a rendezvous algorithm such that, for some $\varphi \in o(E)$, for every pair of agent labels, and for every pair of starting positions of the agents, rendezvous is completed at cost at most $E+\varphi$. As previously explained, instead of behaviour vectors of algorithm $\mathcal{A}$, we consider behaviour vectors
resulting from procedure {\tt Trim}($\mathcal{A}$).

For any execution $\alpha$, let $seg(x,\alpha)$ be the segment of the ring that agent $x$ explores during execution $\alpha$, and denote by $|seg(x,\alpha)|$ the number of edges in this segment.

During any particular round of an execution $\alpha$, we can determine on which `side' of its starting position the agent is currently situated. More specifically, in any round $i$ of $\alpha$, if the prefix of an agent's behaviour vector up to round $i$ has at least as many (resp. at most as many) $-1$'s as $1$'s, then we say that the agent is on its \emph{counterclockwise side} (resp. {\em clockwise side}) in round $i$. Let $seg_{-1}(x,\alpha)$ be the segment of the ring that agent $x$ explores while on its counterclockwise side during execution $\alpha$, and denote by $|seg_{-1}(x,\alpha)|$ the number of edges in this segment. Similarly, let $seg_1(x,\alpha)$ be the segment of the ring that agent $x$ explores while on its clockwise side during execution $\alpha$, and denote by $|seg_1(x,\alpha)|$ the number of edges in this segment. Note that $seg(x,\alpha) = seg_1(x,\alpha) \cup seg_{-1}(x,\alpha)$, hence we have $|seg(x,\alpha)| \leq |seg_1(x,\alpha)|+|seg_{-1}(x,\alpha)|$. 

Note that 
$|seg_{-1}(x,\alpha(x,p_x,\bot,\bot))|$ and $|seg_1(x,\alpha(x,p_x,\bot,\bot))|$
do not depend on the choice of $p_x$, since, in a solo execution, the agent's behaviour is the same regardless of its starting node. If  $|seg_{-1}(x,\alpha(x,p_x,\bot,\bot))| \geq |seg_1(x,\alpha(x,p_x,\bot,\bot))|$, we say that agent $x$ is \emph{counter-clockwise-heavy}. Otherwise, we say that agent $x$ is \emph{clockwise-heavy}. Without loss of generality, we assume that at least half of the agents are clockwise-heavy, and we proceed by considering only the clockwise-heavy agents.

For any agent $x$ and for any node $p_x$, let $forward(x)$ be the number of edges in $seg_{1}(x,\alpha(x,p_x,\bot,\bot))$ and let $back(x)$ be the number of edges in $seg_{-1}(x,\alpha(x,p_x,\bot,\bot))$. Since we consider only clockwise-heavy agents, we have $back(x) \leq forward(x)$.
For any agent $x$ and any execution $\alpha$, let $cost(x,\alpha)$ be the number of edge traversals performed by $x$ during execution $\alpha$. 

\begin{fact}\label{MakeDisjoint}
Consider two agents $A,B$ and two nodes $p_A,p_B$ such that $|seg(A,\alpha(A,p_A,B,p_B))|+|seg(B,\alpha(A,p_A,B,p_B))| < E$. Then, for some node $p_B'$, during the first $|\alpha(A,p_A,B,p_B)|$ rounds of  $\alpha(A,p_A,B,p_B')$, the segments $seg(A,\alpha(A,p_A,B,p_B'))$ and $seg(B,\alpha(A,p_A,B,p_B'))$ are disjoint.
\end{fact}
If the nodes are labeled $0,\ldots,n-1$ in the clockwise direction, then choosing $p_B' = p_A + forward(A) + 1 + back(B) (\!\!\!\!\mod n)$ verifies the above fact.

\begin{fact}\label{fact:SoloCost}
For any agent $A$ and any node $p_A$, $cost(A,\alpha(A,p_A,\bot,\bot)) \geq 2back(A) + forward(A)$.
\end{fact}
To see why, note that, in a solo execution, agent $A$ must visit all edges in $seg(A,\alpha(A,p_A,\bot,\bot))$. To do so, there must be a round in which $A$ returns to $p_A$ after reaching one of the endpoints of $seg(\alpha(A,p_A,\bot,\bot))$. Therefore, $A$ must visit all of the edges in $seg_{-1}(\alpha(A,p_A,\bot,\bot))=back(A)$ at least twice, or all of the edges in $seg_{1}(\alpha(A,p_A,\bot,\bot))=forward(A)$ at least twice. By assumption, $back(A) \leq  forward(A)$, which implies the fact.

\begin{fact}\label{fact:SmallMin}
For any agent $A$, $back(A) \leq \varphi$.
\end{fact}
We prove this fact by contradiction. Assume that, for some agent $A$, $back(A) > \varphi$. Recall, from the trimming of algorithm $\mathcal{A}$, that $m_A$ is defined to be the maximum value of $|\alpha(A,p_x,y,p_y)|$, taken over all $y \in \{1,\ldots,L\} \setminus \{A\}$ and nodes $p_x,p_y$. Choose $p_A,B,p_B$ such that $|\alpha(A,p_A,B,p_B)| = m_A$. Let $\alpha = \alpha(A,p_A,B,p_B)$, and let $\alpha_A = \alpha(A,p_A,\bot,\bot)$.

In the trimmed version of $\mathcal{A}$, $V_A[i]=0$ for all $i > m_A$. Therefore, $A$'s behaviour is identical in both $\alpha$ and $\alpha_A$. In particular, this implies that $cost(A,\alpha) = cost(A,\alpha_A)$ and $seg(A,\alpha) = seg(A,\alpha_A)$. By Fact \ref{fact:SoloCost}, it follows that $cost(A,\alpha) = 2back(A) + forward(A) + \delta$ for some $\delta \geq 0$. Also, since $|seg(A,\alpha_A)| \leq |seg_1(A,\alpha_A)|+|seg_{-1}(A,\alpha_A)|$, it follows that $|seg(A,\alpha)| \leq back(A)+forward(A)$.

Next, note that $|seg(B,\alpha)| \leq cost(B,\alpha)$. Further, since the combined costs incurred by $A$ and $B$ in execution $\alpha$ are at most $E+\varphi$, we get that $cost(B,\alpha) \leq E+\varphi-cost(A,\alpha)$. Thus, $|seg(B,\alpha)| \leq E+\varphi-2back(A)-forward(A)-\delta$. It follows that $|seg(A,\alpha)| + |seg(B,\alpha)| \leq E+\varphi-back(A)-\delta$. By assumption, $\varphi-back(A) < 0$, so we get that $|seg(A,\alpha)| + |seg(B,\alpha)| < E$. 

By Fact \ref{MakeDisjoint}, there is a node $p_B'$ such that, if we execute $\alpha(A,p_A,B,p_B')$ for $m_A$ rounds, then the set of edges traversed by $A$ and the set of edges traversed by $B$ are disjoint. It follows that $A$ and $B$ do not meet during the first $m_A$ rounds of execution $\alpha(A,p_A,B,p_B')$. By the definition of $m_A$, there is no choice of $p_x,y,p_y$ such that $|\alpha(A,p_x,y,p_y)| > m_A$. Therefore, $A$ and $B$ do not meet in execution $\alpha(A,p_A,B,p_B')$, which contradicts the correctness of $\mathcal{A}$. This completes the proof of Fact \ref{fact:SmallMin}.




Starting with an arbitrary node, label the nodes of the ring using the integers $0,\ldots,n-1$, ascending in the clockwise direction.
This is for analysis only: the agents do not have access to any node labeling.
For any execution $\alpha$ involving an agent $A$, let $disp(A,\alpha) = \sum_{j=1}^{|\alpha|} V_A[j]$. In other words, $disp(A,\alpha)$ is the displacement of agent $A$ in the clockwise direction at the end of execution $\alpha$. The following fact follows from this definition.

\begin{fact}\label{fact:BoundDisp}
For any execution $\alpha$ involving an agent $A$, $-back(A) \leq disp(A,\alpha) \leq forward(A)$.
\end{fact}

Let $F = \lceil E/2 \rceil$. 
For any execution $\alpha$ involving agents $A$ and $B$, we say that an agent $A$ is \emph{eager}  if $disp(A,\alpha) \geq disp(B,\alpha)+F$.

\begin{fact}\label{eager}
Consider any two agents $A,B$. In the execution $\alpha(A,0,B,F)$, exactly one of $A$ or $B$ is eager.
\end{fact}
To see why, first note that it cannot be the case that both $A$ and $B$ are eager. Next, if neither agent is eager, then, at the end of the execution, the number of edges that separate $A$ and $B$ is at least $F - |disp(A,\alpha)-disp(B,\alpha)| > 0$, which contradicts rendezvous. This completes the proof of the fact.

A directed graph $G$ is a \emph{tournament} if, for each pair of distinct vertices $a,b \in V(G)$, exactly one of $(a,b)$ or $(b,a)$ is an edge in $E(G)$.
We construct a tournament graph $T$ with $\lfloor \frac{L}{2} \rfloor$ vertices, as follows. First, assign to each vertex in $T$ a unique label from the set of clockwise-heavy agents. Next, for each pair of vertices $A,B$ in $T$, with $A<B$, we add a directed edge between $A$ and $B$ whose tail is the eager agent in $\alpha(A,0,B,F)$.
By Fact \ref{eager}, this operation is well-defined.
Every tournament graph has a directed Hamiltonian path \cite{Red}. Let $(A_1,\ldots,A_{\lfloor \frac{L}{2} \rfloor})$ be the sequence of agent labels encountered along one such path. For each $i \in \{1,\ldots,\lfloor \frac{L}{2} \rfloor-1\}$, let $\alpha_i = \alpha(\min\{A_i,A_{i+1}\},0,\max\{A_i,A_{i+1}\},F)$. 
This definition ensures that $\alpha_i$ is the execution that was used to define the directed edge $(A_i,A_{i+1})$ in the tournament graph. 

\begin{fact}\label{BoundedDisp}
For each $i \in \{1,\ldots,\lfloor \frac{L}{2} \rfloor-1\}$, $disp(A_{i+1},\alpha_i) \leq (F+\varphi)/2$.
\end{fact}
In order to prove this fact, note that $A_i$ is the eager agent in execution $\alpha_i$. Therefore, $disp(A_i,\alpha_i) \geq disp(A_{i+1},\alpha_i) + F$. It follows that the total cost incurred by the two agents in execution $\alpha_i$ is at least $disp(A_i,\alpha_i) + disp(A_{i+1},\alpha_i) \geq 2disp(A_{i+1},\alpha_i) + F$. Thus, $2disp(A_{i+1},\alpha_i)+F \leq E+\varphi \leq 2F+\varphi$, so $disp(A_{i+1},\alpha_i) \leq (F+\varphi)/2$. This completes the proof of the fact.

\begin{fact}\label{longer}
For each $i \in \{1,\ldots,\lfloor \frac{L}{2} \rfloor-1\}$, $|\alpha_{i+1}| > |\alpha _i|$. 
\end{fact}

In order to prove this fact, assume, for the purpose of contradiction, that we have $|\alpha_{i+1}| \leq |\alpha _i|$. Since $A_{i+1}$ is eager in execution $\alpha_{i+1}$, we have $disp(A_{i+1},\alpha_{i+1}) \geq disp(A_{i+2},\alpha_{i+1})+F$, and, by Facts \ref{fact:SmallMin} and \ref{fact:BoundDisp}, it follows that $disp(A_{i+1},\alpha_{i+1}) \geq F-\varphi$. By the assumption that $|\alpha_{i+1}| \leq |\alpha _i|$, it follows that at time $|\alpha_{i+1}|$ in execution $\alpha _i$, agent $A_{i+1}$ has a positive (clockwise) displacement of at least $F-\varphi$, and it incurred a cost of at least $F-\varphi$. Since $A_i$ is eager in execution $\alpha_i$, and the initial distance between the two agents is $F$, it follows that the two agents incur an additional cost of $(F-\varphi)+F$ during execution $\alpha_i$ in order for rendezvous to occur. Hence, the total cost incurred by both agents in execution $\alpha_i$ is at least $2(F-\varphi)+F = 3F-2\varphi > 2F+\varphi \geq E+\varphi$, a contradiction. 


\begin{fact}\label{ind}
For each $i \in \{1,\ldots,\lfloor \frac{L}{2} \rfloor-1\}$, $|\alpha_i| \geq i\left(\frac{F-3\varphi}{2}\right)$. 
\end{fact}
We prove this fact by induction on $i$. 
For the base case, note that, in execution $\alpha _1$, the time needed for rendezvous is at least $F/2$, hence $|\alpha_1| \geq F/2 \geq \frac{F-3\varphi}{2}$.

Next, as induction hypothesis, assume that for some $i \in \{1,\ldots,\lfloor \frac{L}{2} \rfloor-2\}$, $|\alpha_i| \geq i\left(\frac{F-3\varphi}{2}\right)$. Consider the execution $\alpha_{i+1}$. From Fact \ref{BoundedDisp}, $disp(A_{i+1},\alpha_i) = \sum_{j=1}^{|\alpha_i|} V_{A_{i+1}}[j] \leq (F+\varphi)/2$. However, in execution $\alpha_{i+1}$, agent $A_{i+1}$ is eager, so $\sum_{j=1}^{|\alpha_{i+1}|} V_{A_{i+1}}[j] = disp(A_{i+1},\alpha_{i+1}) \geq disp(A_{i+2},\alpha_{i+1})+F$. By Facts \ref{fact:SmallMin} and \ref{fact:BoundDisp},  $disp(A_{i+2},\alpha_{i+1})+F \geq F-\varphi$. So, we have shown that $F-\varphi \leq \sum_{j=1}^{|\alpha_{i+1}|} V_{A_{i+1}}[j] = \left[ \sum_{j=1}^{|\alpha_{i}|} V_{A_{i+1}}[j] \right] + \left[ \sum_{j=|\alpha_i|+1}^{|\alpha_{i+1}|} V_{A_{i+1}}[j] \right] \leq \left[(F+\varphi)/2\right] + \left[ \sum_{j=|\alpha_i|+1}^{|\alpha_{i+1}|} V_{A_{i+1}}[j] \right]$. (Note that the above decomposition of the sum into two sub-sums is possible in view of Fact \ref{longer}). It follows that $\sum_{j=|\alpha_i|+1}^{|\alpha_{i+1}|} V_{A_{i+1}}[j] \geq \frac{F-3\varphi}{2}$, so $|\alpha_{i+1}|-|\alpha_i| \geq \frac{F-3\varphi}{2}$. Finally, by the induction hypothesis, we get that $|\alpha_{i+1}| = \left(|\alpha_{i+1}|-|\alpha_i|\right)+|\alpha_i| \geq  (i+1)\left(\frac{F-3\varphi}{2}\right)$.
This proves Fact \ref{ind} by induction.

Fact \ref{ind} implies that execution $\alpha_{\lfloor \frac{L}{2} \rfloor-1}$ lasts at least  $(\lfloor \frac{L}{2} \rfloor-1)\left(\frac{F-3\varphi}{2}\right) \in \Omega(EL)$ rounds.
\end{proof}

Our second lower bound shows that no rendezvous algorithm of time complexity of Algorithm {\tt Fast}
can beat the cost complexity of this algorithm.

\begin{theorem}\label{LBcost}
Any deterministic rendezvous algorithm with time $O(E\log L)$ must have cost $\Omega (E\log{L})$.
\end{theorem}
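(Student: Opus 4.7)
The plan is to mirror the structure of the preceding proof. First, I apply the procedure \texttt{Trim}$(\mathcal{A})$ to replace $\mathcal{A}$'s behaviour vectors by trimmed versions $V_x \in \{-1,0,1\}^T$ of length at most $T = c_1 E \log L$. A short argument then shows that the weight of each $V_x$ is at most the worst-case cost $C$: in the pair execution that realises $m_x$, agent $x$'s solo movements up to round $m_x$ are exactly counted, and the total cost of that execution is at most $C$, so the contribution of $x$ alone is at most $C$.

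Next, for every pair of distinct labels $x, y$ I consider the relative-displacement walk $\Delta_{xy}(t) = D_y(t) - D_x(t)$, a walk on $\mathbb{Z}$ starting at $0$ whose steps $V_y[t]-V_x[t]$ lie in $\{-2,\ldots,2\}$. Correctness of $\mathcal{A}$ for every starting distance in the oriented ring of size $n$ forces $\{\Delta_{xy}(t) \bmod n : 1 \le t \le T\} = \mathbb{Z}_n$; otherwise there is a distance $d$ at which $x$ and $y$ never meet. Since each nonzero step has absolute value at most $2$ and the walk must span an integer range of at least $n-1$ in order to cover all residues, the combined weight of $V_x$ and $V_y$ is at least $(n-1)/2 = \Omega(E)$. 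This yields only a per-pair cost lower bound of $\Omega(E)$ and must be boosted to $\Omega(E \log L)$.

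To extract the extra $\log L$ factor, my plan is to partition $[1,T]$ into $\Theta(\log L)$ consecutive epochs of length $E$ and iterate a refinement. Starting from the full label set, at each iteration I would identify, via a pigeonhole or tournament-style construction over the epoch-displacement profiles, a subpopulation of labels whose $\Delta$-walks have not yet distinguished them by the end of the current epoch; such a surviving pair is then forced by the residue-coverage requirement to incur $\Omega(E)$ of \emph{fresh} combined movement in the next epoch. Iterating this $\Theta(\log L)$ times produces a pair whose combined pair-execution cost accumulates to $\Omega(E \log L)$, which gives the theorem.

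The main obstacle I anticipate is the cost accounting across epochs, because a $\Delta_{xy}$-walk may wrap around the ring and re-visit residues, which a priori lets the algorithm amortise cost between epochs. Handling this cleanly will likely require working at a carefully chosen initial distance (analogous to $F = \lceil E/2 \rceil$ in the previous theorem) and tracking, via a potential function, the set of residues still uncovered by each surviving pair, so that each new epoch is certified to contribute genuinely new edge traversals rather than rearrangements of old ones.
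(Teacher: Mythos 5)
Your set-up (trimming, and the observation that for every pair $x,y$ the relative-displacement walk must hit every residue mod $n$, forcing combined weight $\Omega(E)$) is sound, but it only reproduces the trivial lower bound of $\Omega(E)$ on cost. The entire content of the theorem is the extra $\log L$ factor, and for that your proposal is a sketch with the decisive step missing --- a gap you yourself flag in the last paragraph. Concretely, two things do not go through as stated. First, the residue-coverage requirement is a union over \emph{different} executions (one per starting distance $d$): the cost of reaching residue $d$ is incurred in the execution started at distance $d$, so these contributions do not accumulate inside any single execution, and a ``potential function over uncovered residues'' cannot certify fresh cost within one run. Second, in your iterative refinement the ``surviving subpopulation'' changes from epoch to epoch, so the $\Omega(E)$ of fresh movement you extract in epoch $i$ is witnessed by some pair $(x_i,y_i)$ that need not coincide with the pair of epoch $i+1$; no single execution (or single agent) is shown to accumulate $\Omega(E\log L)$. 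You would need one fixed agent to be forced to move in $\Omega(\log L)$ disjoint time windows, and your scheme does not produce that.

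The paper closes exactly this gap with machinery you do not have. It discretizes positions into $6$ sectors and time into $O(\log L)$ blocks of length $n/6$, assigns each agent an \emph{aggregate behaviour vector} over blocks, and then compresses it into a \emph{progress vector} that records only genuine sector crossings (oscillations are zeroed out). The two key lemmas are: (i) two agents with \emph{identical} progress vectors, started at antipodal nodes, never meet --- so among $\lceil L/\Theta(\log L)\rceil$ agents whose trimmed executions end in the same block, all progress vectors must be distinct; and (ii) a counting bound showing there are fewer than that many vectors of length $O(\log L)$ with $o(\log L)$ non-zero entries. Hence some single agent's progress vector has $\Omega(\log L)$ non-zero entries, and each consecutive pair of non-zero entries forces that agent to sweep an entire sector ($E/6$ edge traversals) in disjoint time intervals, giving one solo (and hence one pair) execution of cost $\Omega(E\log L)$. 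The essential idea your proposal lacks is this ``low cost $\Rightarrow$ few distinguishable coarse behaviours $\Rightarrow$ two agents that never meet'' contrapositive; without it, the pigeonhole over epochs has nothing to count.
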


\begin{proof}
Let $\mathcal{A}$ be a rendezvous algorithm such that for every pair of agent labels, and for every pair of starting positions of the agents, rendezvous is completed in at most $cE\log{L}$ rounds, for some constant $c>0$. Our goal is to prove that there exists an execution in which the total combined cost incurred by the agents is in $\Omega(E\log L)$. 

Instead of behaviour vectors of algorithm $\mathcal{A}$, we consider behaviour vectors resulting from procedure {\tt Trim}($\mathcal{A}$). Recall, from the trimming of algorithm $\mathcal{A}$, that $m_x$ is defined to be the maximum value of $|\alpha(x,p_x,y,p_y)|$, taken over all $y \in \{1,\ldots,L\} \setminus \{x\}$ and all nodes $p_x,p_y$. Further, in agent $x$'s trimmed behaviour vector, all entries after $m_x$ have value 0.

Starting with an arbitrary node, label the nodes of the ring using the integers $0,\ldots,n-1$, ascending in the clockwise direction.
This is for analysis only: the agents do not have access to any node labeling.
For simplicity, assume that $n$ is divisible by 6. The proof can be modified in the general case.
Partition the set of nodes into $6$ equal-sized {\em sectors}: for each $j \in \{0,\ldots,5\}$, let $P_j$ be $\{j(\frac{n}{6}),\ldots,(j+1)(\frac{n}{6})-1\}$. For ease of notation, it will be assumed that all subscripts of sectors are taken modulo 6. Next, let $L' = \lceil 6c\log{L}\rceil$, and, for each integer $i \in \{1,\ldots,L'\}$, we define \emph{block} $B_i$ to be the time interval $[(i-1)(\frac{n}{6})+1,\ldots,i(\frac{n}{6})]$. For each agent $x$, let $B(x)$ be the block that contains round $m_x$. Since there are $L$ distinct agent labels and $L' < L$ blocks, it follows from the Pigeonhole Principle that there exist at least $\ell = \lceil L/L' \rceil$ agents $x_1,\ldots,x_\ell$ such that $B(x_1) = \cdots = B(x_\ell)$. Let $M \leq L'$ be the index of the block that contains $m_{x_1},\ldots,m_{x_{\ell}}$. In what follows, we only consider agents from the set $\{x_1,\ldots,x_{\ell}\}$.

Since the number of nodes in a sector is equal to the number of rounds in a block, we observe that the segment of the ring explored by an agent during a single block cannot contain nodes from 3 different sectors. This implies the following fact about which nodes an agent may visit during a given block.

\begin{fact}\label{fact:AtMostOne}
If agent $x$ is located in a sector $P_j$ at the beginning of a block $B_i$, then, in all rounds from the beginning of block $B_i$ until the beginning of block $B_{i+1}$, $x$ is never located at a node outside of $P_{j-1} \cup P_j \cup P_{j+1}$.
\end{fact}

We now define an \emph{aggregate behaviour vector} for $x$, denoted by $Agg_{x,p_x}$, that describes $x$'s movement in execution $\alpha(x,p_x,\bot,\bot)$ during each of the blocks $B_1,\ldots,B_M$. At the beginning of an arbitrary block $B_i$, suppose that agent $x$ is located at a node in $P_j$ for some $j \in \{0,\ldots,5\}$. By Fact \ref{fact:AtMostOne}, at the beginning of block $B_{i+1}$, agent $x$ is located at a node in $P_{j-1} \cup P_{j} \cup P_{j+1}$. For each $i \in \{1,\ldots,M\}$, we define $Agg_{x,p_x}[i]$ to be $z \in\{-1,0,1\}$ if $x$ is located at a node in $P_{j+z}$ at the beginning of block $B_{i+1}$. Note that, for any choice of nodes $p_x,p_x'$ such that $p_x \equiv p'_x(\!\!\!\!\mod \frac{n}{6})$, we get $Agg_{x,p_x} = Agg_{x,p_x'}$. In particular, this implies the following useful fact.

\begin{fact}\label{fact:equiv}
For any agent $y$, $Agg_{y,0} = Agg_{y,\frac{n}{2}}$.
\end{fact}

For any integer-valued vector $V$, define $surplus(V) = \sum_{i=1}^{length(V)} V[i]$.
For a vector $V$, we write $V[a \dots b]$ to denote the part of the vector $V$ between positions $a$ and $b$, inclusive.

The following fact gives a necessary condition on the aggregate vectors of agents that can meet.

\begin{fact}\label{fact:bigdisp}
Consider any distinct agents $x,y$ and any fixed $i,m \in \{1,\ldots, M\}$ such that $i \leq m$. Suppose that, at the beginning of block $B_i$ during the execution $\alpha(x,0,y,\frac{n}{2})$, agent $x$ is located at a node in $P_j$ and agent $y$ is located at a node in $P_{j+3}$. If, for all $k \in \{i,\ldots,m\}$, $|surplus(Agg_{x,0}[i \ldots k])| \leq 1$ and $|surplus(Agg_{y,0}[i \ldots k])| \leq 1$, then agents $x$ and $y$ do not meet in the time interval between the beginning of block $B_i$  and the beginning of block $B_{m+1}$.
\end{fact}
To prove this fact, note that,
since $|surplus(Agg_{x,0}[i \ldots k])| \leq 1$ for all $k \in \{i,\ldots,m\}$, it follows that, in all rounds after the beginning of block $B_i$ until the end of block $B_m$, $x$ is not located at a node outside of $P_{j-1} \cup P_j \cup P_{j+1} = P_{j+5} \cup P_j \cup P_{j+1}$. Next, by Fact \ref{fact:equiv}, we have $Agg_{y,0}=Agg_{y,\frac{n}{2}}$, so $surplus(Agg_{y,0}[i \ldots k]) = surplus(Agg_{y,\frac{n}{2}}[i \ldots k])$ for all $k \in \{i,\ldots,m\}$. Since $|surplus(Agg_{y,0}[i \ldots k])| \leq 1$ for all $k \in \{i,\ldots,m\}$, it follows that, at all times after the beginning of block $B_i$ until the end of block $B_m$, agent $y$ is not located at a node outside of $P_{j+2} \cup P_{j+3} \cup P_{j+4}$. So, during blocks $B_i,\ldots,B_m$ of execution $\alpha(x,0,y,\frac{n}{2})$, agents $x$ and $y$ are never located at the same node. This completes the proof of Fact \ref{fact:bigdisp}.

We now define a \emph{progress vector} for each agent $x$, denoted by $Prog_{x,p_x}$. At a high level, an agent $x$'s progress vector keeps track of each time that $x$ takes a ``significant'' number of steps more in one direction than in the other. Essentially, our goal is to zero out the entries of $x$'s aggregate behaviour vector that amount to $x$ oscillating back and forth on the ring without making sufficient progress towards the other agent. More formally, a node $x$'s progress vector $Prog_{x,p_x}$ is obtained from its aggregate behaviour vector $Agg_{x,p_x}$ in the following way. First, if every prefix of $Agg_{x,p_x}$ has surplus of absolute value at most 1, then $Prog_{x,p_x}$ is defined to be the zero-vector of length $M$. This means that $x$ is essentially idle and waiting for the other agent to come meet it. Otherwise, when there is a prefix of $Agg_{x,p_x}$ that has surplus of absolute value 2, then the smallest such prefix $pre$ is chosen. Next, the `significant' non-zero entries are found, i.e., entries that actually contribute to the large surplus. More formally, consider the case where $surplus(pre) = 2$ (the case where $surplus(pre) = -2$ is symmetric) and suppose that $x$ is initially located at a node in $P_j$. We determine the last block $B_{a}$ during which $x$ moves from $P_j$ to $P_{j+1}$, and, the first block $B_{b}$ during which $x$ moves from $P_{j+1}$ to $P_{j+2}$. Note that, by definition, $b = length(pre)$. Then, we set $Prog_{x,p_x}[i] = Agg_{x,p_x}[i]$ for each $i \in \{a,b\}$, and set $Prog_{x,p_x}[i] = 0$ for each $i \in \{1,\ldots,length(pre)\}\setminus \{a,b\}$. The rest of $Prog_{x,p_x}$ is calculated by repeating the above process on the remaining part of the aggregate behaviour vector, i.e., on $Agg_{x,p_x}[length(pre)+1 \ldots M]$. A complete description is provided in the following pseudocode.

\begin{algorithm}[H]
\caption{\texttt{DefineProgress($Agg$)}}
\begin{algorithmic}[1]
\STATE $Prog \leftarrow \textrm{0-vector of length $M$}$
\STATE $s \leftarrow 1$
\LOOP
\IF{$(s > M)$ OR $|surplus(Agg[s \ldots k])| \leq 1 \textrm{ for all $k \in \{s,\ldots,M\}$}$}
	\STATE \% {\it Case 1: no surplus with absolute value at least 2}
	\STATE \% {\it We don't preserve any remaining entries from $Agg$}
	\RETURN $Prog$
\ELSE
	\STATE \% {\it Case 2: there exists a prefix such that surplus has absolute value 2}
	\STATE \% {\it Find the 2 ``significant'' entries to preserve from $Agg$}
	\STATE \begin{varwidth}[t]{\linewidth}$b \leftarrow \textrm{ smallest $i \geq s$ such that}$\par
		 \hskip\algorithmicindent\hskip\algorithmicindent $|surplus(Agg[s \ldots i])| = 2$
		 \end{varwidth}\label{firstassign}
	\STATE \begin{varwidth}[t]{\linewidth} $a \leftarrow \textrm{ smallest integer in $\{s,\ldots,b\}$ such that,}$\par
		\hskip\algorithmicindent\hskip\algorithmicindent for all $i \in \{a,\ldots,b\}$, $|surplus(Agg[s \ldots i])| \geq 1$
		\end{varwidth}\label{secondassign}
	\STATE set $Prog[a]$ and $Prog[b]$ equal to $Agg[b]$\label{valueset}
	\STATE $s \leftarrow b+1$
\ENDIF
\ENDLOOP
\end{algorithmic}
\end{algorithm}

In the construction of $Prog_{x,0}$, consider an arbitrary iteration $j$ of the loop. We denote by $s_j$ the value of $s$ at the beginning of iteration $j$, and we denote by $a_j$ and $b_j$ the values of $a$ and $b$, respectively, at the end of iteration $j$. In what follows, we will use the following invariants about the construction of $Prog_{x,0}$.

\begin{fact}\label{fact:inequalities}
For an arbitrary loop iteration $j$ before the final one, we have $s_j \leq a_j < b_j < s_{j+1}$.
\end{fact}
To see why this is true, we first note that $a_j$ is chosen from the range $\{s_j,\ldots,b_j\}$. It cannot be the case that $a_j=b_j$, since $|surplus(Agg[s_j \ldots b_j])| = 2 > 1=|surplus(Agg[s_j \ldots a_j])|$. The last inequality holds since, at the end of the loop, $s_{j+1}$ is set to $b_j+1$.

\begin{fact}\label{fact:PreserveNonzero}
At line \ref{valueset}, $Agg[a] = Agg[b] = Prog[b] = Prog[a] \neq 0$.
\end{fact}
To see why this is true, it is sufficient to consider the case where $surplus(Agg[s \ldots b]) > 0$ and prove that $Agg[a]=Agg[b]=1$ (in the case where this surplus is negative, a similar proof shows that $Agg[a]=Agg[b]=-1$.) From line \ref{firstassign}, $b$ is the smallest index greater than or equal to $s$ such that $surplus(Agg[s \ldots b]) = 2$. Clearly, $b > s$ since, otherwise, $surplus(Agg[s \ldots b]) = Agg[s] \in \{-1,0,1\}$. Further, if $Agg[b] \in \{-1,0\}$, then $surplus(Agg[s \ldots b-1]) \geq surplus(Agg[s \ldots b-1]) + Agg[b] = surplus(Agg[s \ldots b]) = 2$, which contradicts the minimality of $b$. So, we conclude that $Agg[b]=1$. Next, from line \ref{secondassign} and the fact that $surplus(Agg[s \ldots b]) = 2$, $a$ is the smallest index in the range $\{s,\ldots,b\}$ such that, for all $i \in \{a,\ldots,b\}$, $surplus(Agg[s \ldots i]) \geq 1$. If $a = s$, then $Agg[a] = surplus(Agg[s \ldots a]) \geq 1$, which implies that $Agg[a]=1$. If $a > s$ and $Agg[a] \in \{-1,0\}$, then $surplus(Agg[s \ldots a-1]) \geq surplus(Agg[s \ldots a-1])+Agg[a]=surplus(Agg[s \ldots a]) \geq 1$, which contradicts the minimality of $a$. So, we conclude that $Agg[a]=1$, which completes the proof of Fact \ref{fact:PreserveNonzero}.


Our next goal is to show that progress vectors of different agents must be distinct. This is not immediately clear because, in the construction of progress vectors, distinct aggregate behaviour vectors can be mapped to equal progress vectors. The following technical result will be used to show that the entries of an agent's aggregate behaviour vector that got converted to zeroes in the agent's progress vector actually do not contribute to the completion of rendezvous.

\begin{fact}\label{fact:zeros}
Consider any agent $x$, and consider any integers $i_1\leq i_2$ in $\{1,\ldots,M\}$ such that $Prog_{x,0}[i_1 \ldots i_2]$ is a maximal sequence of 0's in $Prog_{x,0}$. Then,
\begin{enumerate}
\item for each $i \in \{i_1,\ldots,i_2\}$, $|surplus(Agg_{x,0}[i_1 \ldots i])| \leq 1$, and,
\item if $i_2 \neq M$, $surplus(Agg_{x,0}[i_1 \ldots i_2]) = 0$.
\end{enumerate}
\end{fact}
To prove this fact, consider any $i_1 \leq i_2$ in $\{1,\ldots,M\}$ such that $Prog_{x,0}[i_1 \ldots i_2]$ is a maximal sequence of 0's in $Prog_{x,0}$. In the construction of $Prog_{x,0}$, there exists an iteration $j$ such that either:
\begin{enumerate}
\item $i_1 = s_j, i_2 = a_j-1$, or,
\item $i_1 = a_j+1, i_2 = b_j-1$, or,
\item $i_1 = s_j, i_2 = M$.
\end{enumerate}
If $i_1 = s_j$ and $i_2 = a_j-1$, Fact \ref{fact:inequalities} implies that $i_1 \leq i_2 < b_j$. So, by the minimality of $b_j$, for each $i \in \{i_1,\ldots,i_2\}$, we have $|surplus(Agg_{x,0}[i_1 \ldots i])| \leq 1$. Also, by the minimality of $a_j$, $|surplus(Agg_{x,0}[s_j \ldots a_j-1])| < 1$, that is, $surplus(Agg_{x,0}[i_1 \ldots i_2]) = 0$.

If $i_1 = a_j+1$ and $i_2 = b_j-1$, note that, by the choice of $a_j$, $|surplus(Agg_{x,0}[s_j \ldots i])| \geq 1$ for all $i \in \{i_1-1,\ldots i_2\}$. Also, by the minimality of $b_j$, $|surplus(Agg_{x,0}[s_j \ldots i])| \leq 1$ for all $i \in \{i_1-1,\ldots,i_2\}$. Therefore, for all $i \in \{i_1-1,\ldots,i_2\}$, we have $|surplus(Agg_{x,0}[s_j \ldots i])| = 1$. So, for an arbitrary $i \in \{i_1,\ldots,i_2\}$, $|surplus(Agg_{x,0}[s_j \ldots i])| = 1$ and  $|surplus(Agg_{x,0}[s_j \ldots i-1])|=1$, which implies that $Agg_{x,0}[i] \in \{-2,0,2\}$. We conclude that $Agg_{x,0}[i]=0$ for all $i \in \{i_1,\ldots,i_2\}$. It follows that $surplus(Agg_{x,0}[i_1 \ldots i]) = 0$ for all $i \in \{i_1,\ldots,i_2\}$.

If $i_1 = s_j$ and $i_2=M$, then we must have reached Case 1 in loop iteration $j$. It follows that $|surplus(Agg_{x,0}[i_1 \ldots i])| \leq 1$ for each $i \in \{i_1,\ldots,i_2\}$. This completes the proof of Fact \ref{fact:zeros}.

We now show that, in order to meet in every execution, agents must have distinct progress vectors.

\begin{fact}\label{fact:distinct}
For any distinct agents $x,y$, if $Prog_{x,0} = Prog_{y,0}$, then $x$ and $y$ do not meet in execution $\alpha(x,0,y,\frac{n}{2})$.
\end{fact} 
To establish this fact, it is sufficient to prove the following statement: 
\begin{quotation}
\noindent
for all $i \in \{1,\ldots,M\}$, if
\begin{itemize}
\item $i=1$ or $Prog_{x,0}[i-1] \neq 0$, and,
\item
at the beginning of a block $B_i$ of execution $\alpha(x,0,y,\frac{n}{2})$, for some $j \in \{0,\ldots,5\}$, $x$ is at a node in $P_j$ and $y$ is at a node in $P_{j+3}$, and,
\item
$Prog_{x,0}[i \ldots M] = Prog_{y,0}[i \ldots M]$, 
\end{itemize}
then $x$ and $y$ do not meet after the beginning of block $B_i$ of execution $\alpha(x,0,y,\frac{n}{2})$. 
\end{quotation}

We prove this statement by induction on the number $k$ of non-zero entries in $Prog_{x,0}[i \ldots M]$, for arbitrary $i \in \{1,\ldots,M\}$. The base case of the induction is for $k=0$. For an arbitrary $i \in \{1,\ldots,M\}$, suppose that the three conditions of the statement hold. Then, $Prog_{x,0}[i \ldots M]$ and $Prog_{y,0}[i \ldots M]$ are sequences of consecutive 0's in $Prog_{x,0}$ and $Prog_{y,0}$, respectively. Since $i=1$ or $Prog_{x,0}[i-1] \neq 0$, these sequences are maximal. So, by Fact \ref{fact:zeros}, every prefix of $Agg_{x,0}[i \ldots M]$ and every prefix of $Agg_{y,0}[i \ldots M]$ have surpluses with absolute value at most 1. By Fact \ref{fact:bigdisp}, $x$ and $y$ do not meet in execution $\alpha(x,0,y,\frac{n}{2})$ after the beginning of block $B_i$.

As induction hypothesis, assume that, for all $i \in \{1,\ldots,M\}$, if the three conditions of the statement hold, and, for some $k \geq 0$, there are $k$ non-zero entries in $Prog_{x,0}[i \ldots M]$, then $x$ and $y$ do not meet after the beginning of block $B_i$ of execution $\alpha(x,0,y,\frac{n}{2})$. 

Now, consider an arbitrary $i \in \{1,\ldots,M\}$. Suppose that there are $k+1$ non-zero entries in $Prog_{x,0}[i \ldots M]$, and the three conditions of the statement hold.
Let $i'$ be the first non-zero entry in $Prog_{x,0}[i \ldots M]$. We set out to show that no rendezvous occurs during blocks $B_i,\ldots,B_{i'}$ and that the three conditions of the statement hold when $i$ is replaced with $i'+1$. This is sufficient to complete the proof: since the number of non-zero entries in $Prog_{x,0}[i'+1 \ldots M]$ is $k$, the induction hypothesis implies that agents $x$ and $y$ do not meet after the beginning of block $B_{i'+1}$.

First, we show that rendezvous does not occur during blocks $B_i,\ldots,B_{i'-1}$. If $i=i'$, there is nothing to prove. Otherwise, since $i=1$ or $Prog_{x,0}[i-1] \neq 0$, it follows that $Prog_{x,0}[i\ldots i'-1]$ is a maximal sequence of 0's. Therefore, by Fact \ref{fact:zeros}, every prefix of $Agg_{x,0}[i \ldots i'-1]$ and every prefix of $Agg_{y,0}[i \ldots i'-1]$ have surpluses with absolute value at most 1. By Fact \ref{fact:bigdisp}, $x$ and $y$ do not meet during any of the blocks $B_i,\ldots,B_{i'-1}$. 

Next, we show that, at the beginning of block $B_{i'}$, $x$ is located at a node in $P_j$ and that $y$ is located at a node in $P_{j+3}$. If $i=i'$, this is true by assumption. Otherwise, note that $Prog_{x,0}[i\ldots i'-1]$ is a maximal sequence of 0's and that $i'-1 < i' \leq M$. Therefore, by Fact \ref{fact:zeros}, $surplus(Agg_{x,0}[i\ldots i'-1]) = 0$, and hence, at the beginning of block $B_{i'}$ agent $x$ is in the same sector as at the beginning of block $B_i$. The same holds for agent $y$. 
We conclude that rendezvous does not occur during block $B_{i'}$. This follows from Fact \ref{fact:bigdisp}, since $|surplus(Prog_{x,0}[i'\ldots i'])| \leq 1$ and $|surplus(Prog_{x,0}[i'\ldots i'])| \leq 1$.

Finally, we show that the three conditions of the statement hold at the beginning of block $B_{i'+1}$. The first condition holds since $Prog_{x,0}[i'] \neq 0$. Also, the third condition holds since we assumed that $Prog_{x,0}[i \ldots M] = Prog_{y,0}[i \ldots M]$. To show that the second condition holds, note that, by the definition of the aggregate behaviour vector, at the beginning of block $B_{i'+1}$, agent $x$ is located at a node in $P_{j+Agg_{x,0}[i']}$, and agent $y$ is located at a node in $P_{j+3+Agg_{y,\frac{n}{2}}[i']}$. By Facts \ref{fact:equiv} and \ref{fact:PreserveNonzero}, $Agg_{x,0}[i'] = Prog_{x,0}[i'] = Prog_{y,0}[i'] = Agg_{y,0}[i'] = Agg_{y,\frac{n}{2}}[i']$.  Thus, for $j' = j+Agg_{x,0}[i']$, agent $x$ is located at a node in $P_{j'}$ and $y$ is located at a node in $P_{j'+3}$ at the beginning of block $B_{i'+1}$. This completes the proof by induction and hence completes the proof of Fact \ref{fact:distinct}.

Using the fact that the progress vectors must all be distinct (cf. Fact \ref{fact:distinct}), we now show that there must be a progress vector of large weight.

\begin{fact}\label{fact:ExistsHeavy}
Consider the $\ell = \lceil \frac{L}{\lceil 6c\log{L}\rceil} \rceil$ distinct progress vectors $Prog_{x_1},\ldots,Prog_{x_\ell}$. There exists $j \in \{1,\ldots,\ell\}$ such that $Prog_{x_j,0}$ contains $\Omega(\log{L})$ non-zero entries.
\end{fact}
To prove this fact, we show that, for a sufficiently small constant $\gamma$, there are fewer than $\ell$ distinct vectors of length $M$ with at most $\gamma\log{L}$ non-zero entries. The fact will then follow from the Pigeonhole Principle.

Using the bound $\binom{n}{k} \leq \left( \frac{en}{k} \right)^k$ (where $e$ is the Euler constant), the total number of vectors of length $n$ with at most $k$ non-zero entries can be bounded above as follows:
$$ \binom{n}{0} + \binom{n}{1} + \cdots + \binom{n}{k}  \leq  (k+1)e^k\left(\frac{n}{k}\right)^k \leq (2^k)e^k\left(\frac{n}{k}\right)^k.$$

Let $f$ be a constant for which $f \geq 1/(4e)$ and $ \lceil 6c\log{L}\rceil \leq f\log L$.
Substituting $n = M \leq f \log{L}$ and $k =\lfloor \gamma\log{L}\rfloor$, we get that the number of distinct vectors of length $M$ with at most $\gamma\log{L}$ non-zero entries is bounded above by $(2e)^{\lfloor \gamma\log{L}\rfloor}\left(\frac{f\log{L}}{\lfloor \gamma\log{L}\rfloor}\right)^{\lfloor \gamma\log{L}\rfloor} \leq  \left(4ef\right)^{\gamma\log{L}}\left(\left(\frac{1}{\gamma}\right)^{\gamma}\right)^{\log{L}}$. 
Next, it is not difficult to show that $\left(\frac{1}{\gamma}\right)^{\gamma}$ converges to 1 as $\gamma$ approaches 0. 
Also since $\ell = \lceil \frac{L}{\lceil 6c\log{L} \rceil} \rceil$, there exists a positive constant $\beta < 1$ such that $\ell > L^{\beta}$. 
So, we pick sufficiently small $1>\gamma'>0$ such that $\log{\left(\left(\frac{1}{\gamma'}\right)^{\gamma'}\right)} < \beta/2$, and, for all $\gamma \leq \gamma'$, $\left(\left(\frac{1}{\gamma}\right)^{\gamma}\right)^{\log{L}} < L^{\beta/2}$. Next, let $\gamma'' = \frac{\beta}{2\log{(4ef)}}$, and note that, for all $\gamma \leq \gamma''$, $\left(4ef\right)^{\gamma''\log{L}} \leq L^{\beta/2}$. Therefore, taking $\gamma = \min\{\gamma',\gamma''\}$, it follows that the number of distinct vectors of length $M$ with at most $\gamma\log{L}$ non-zero entries it at most $\left(4ef\right)^{\gamma\log{L}}\left(\left(\frac{1}{\gamma}\right)^{\gamma}\right)^{\log{L}} \leq L^\beta$, which is less than $\ell$. This completes the proof of Fact \ref{fact:ExistsHeavy}.
 
We now set out to prove that there exists an agent incurring cost $\Omega(E\log L)$ in some execution of the algorithm. During each iteration $i$ of the loop in the construction of $Prog_{x_j,0}$ (except for the last), two entries, at positions $a_i$ and $b_i$, are set to non-zero values. In particular, this means that, for all $d \in \{a_i+1,\ldots,b_i-1\}$, $Prog_{x_j,0}[d]=0$. Let $k$ be the number of iterations  in which two entries are set to non-zero values. From Fact \ref{fact:inequalities}, we know that $a_1 < b_1 < a_2 < b_2 < \cdots < a_k < b_k$. 
From Fact \ref{fact:PreserveNonzero}, we know that for each $i \in \{1,\dots , k\}$ we have $Prog_{x_j,0}[a_i] = Prog_{x_j,0}[b_i] \neq 0$.

The following fact shows that the number of non-zero entries in a progress vector induces a lower bound on the cost incurred by an agent. 

\begin{fact}\label{fact:costly}
Consider any agent $x$ and any integers $a_1,b_1,\ldots,a_k,b_k \in \{1,\ldots,M\}$ such that
\begin{itemize}
\item $a_1<b_1<\cdots<a_k<b_k$, and,
\item for each $i \in \{1,\ldots,k\}$, $Prog_{x,0}[a_i] = Prog_{x,0}[b_i] \neq 0$, and,
\item for each $i \in \{1,\ldots,k\}$ and each $d \in \{a_i+1,\ldots,b_i-1\}$, $Prog_{x,0}[d]=0$.
\end{itemize}
During execution $\alpha(x,0,\bot,\bot)$, agent $x$ performs at least $\frac{kE}{6}$ edge traversals.
\end{fact}
To see why this is true, consider an arbitrary $i \in \{1,\ldots,k\}$ and suppose that $Prog_{x,0}[a_i] = Prog_{x,0}[b_i] = 1$ (the case where $Prog_{x,0}[a_i] = Prog_{x,0}[b_i]=-1$ is symmetric). At the beginning of block $B_{a_i}$, agent $x$ is located in some sector $P_j$. By Fact \ref{fact:PreserveNonzero}, $Agg_{x,0}[a_i] = Prog_{x,0}[a_i]$, so, at the beginning of block $B_{a_i+1}$, agent $x$ is located in sector $P_{j+1}$. Next, since $Prog_{x,0}[a_i+1 \ldots b_i-1]$ is a maximal sequence of 0's, and $b_i-1 < M$, it follows from Fact \ref{fact:zeros} that $surplus(Agg_{x,0}[a_i+1 \ldots b_i-1]) = 0$. Therefore, at the beginning of block $B_{b_i}$, $x$ is still located in sector $P_{j+1}$. Finally, by Fact \ref{fact:PreserveNonzero}, $Agg_{x,0}[b_i] = Prog_{x,0}[b_i]$, so, at the beginning of block $B_{b_i+1}$, $x$ is located in sector $P_{j+2}$. It follows that, from the beginning of block $B_{a_i}$ until the end of block $B_{b_i}$, agent $x$ must have visited every node in sector $P_{j+1}$, i.e., it traversed at least $\frac{E}{6}$ edges. The inequalities $a_1 < b_1 < \cdots < a_k < b_k$ give us $k$ disjoint time intervals during each of which at least $\frac{E}{6}$ edges are traversed.
This completes the proof of Fact \ref{fact:costly}.
%
%

By Fact \ref{fact:ExistsHeavy} there exists an agent $x_j$ such that $Prog_{x_j,0}$ has at least $\Omega(\log L)$ non-zero entries. 
Applying Fact \ref{fact:costly} to this agent implies that it incurs cost  $\Omega(E\log L)$ in its solo execution of the trimmed version of algorithm $\cal A$.
Hence, there exists an agent $y$ and nodes $p_{x_j}$ and $p_y$, such that agent $x_j$ incurs the same cost in execution
$\alpha(x_j,p_{x_j},y,p_y)$. This completes the proof of  Theorem \ref{LBcost}.
\end{proof}

\section{Conclusion}

We established tight tradeoffs at both ends of the time/cost tradeoff curve,
up to multiplicative constants. This suggests that if we want to minimize cost (respectively time) of rendezvous, then our natural algorithms {\tt Cheap}
(respectively {\tt Fast}) are good choices.  
A challenging
open problem yielded by our work is establishing the entire precise tradeoff curve, i.e., finding, for each cost value between $\Theta(E)$ and $\Theta(E\log L)$, the minimum time of rendezvous that can be performed at this cost. In particular,  it is natural to ask if the performance of our Algorithm {\tt FastWithRelabeling}$(s)$ is on, or close to, this
optimal tradeoff curve. 

In this paper, we adopted a model in which both agents are located at their starting positions from the beginning, and the adversary wakes them up possibly 
at different times. Hence, if the delay is sufficiently large, it is possible that the earlier agent finds the later agent before it even starts executing the algorithm. Consequently, both time and cost are counted from the wake-up of the earlier agent. 
Such an approach is natural, since we are interested in both the time and cost of the algorithm, and the cost is defined as 
the combined number of edge traversals by both agents. An alternative model, used in papers dealing only with time of rendezvous (cf. \cite{DFKP,TSZ07}),
assumes that agents are ``parachuted'' onto their respective starting positions at the time of their wake-up. In this model, the earlier agent cannot find the later
agent before its wake-up because the later agent is not yet present. Hence, in  \cite{DFKP,TSZ07}, time was counted from the wake-up of the {\em later} agent, since otherwise rendezvous time can be made arbitrarily large by an adversary. Similarly, in our case, we would have to count both the time and the cost since the wake-up of the later agent.
This does not seem natural as far as cost is concerned, because 
it is often the case that incurring cost results in consuming a limited resource, such as energy. So ignoring the cost incurred by the earlier agent until the wake-up
of the later agent is unrealistic. 
Nevertheless, the time and cost
complexities of our algorithms do not change in this alternative model (although the proofs have to be slightly modified). Our lower bounds are not affected either,
as they work even for simultaneous start.

Finally, we address our assumption that an exploration procedure and its cost $E$ are known. As we argued in the introduction, the exploration time
is a benchmark for the cost of rendezvous. Further, this knowledge can be deduced by the agents from an upper bound on the size of the graph.
What if agents do not have any such upper bound? It turns out that our algorithms can be slightly modified to preserve their time and cost complexities in this case as well. Recall that a Universal Exploration Sequence (UXS) is a sequence of integers 
that can be used to explore any graph of size at most $m$ at cost $R(m)$, for some fixed polynomial $R$, starting at any node of the graph. 
Let  ${\tt EXPLORE}_i$ be the the UXS-based exploration procedure for the class of graphs of size at most $2^i$, and let $E_i$ be the time of  ${\tt EXPLORE}_i$.
Each of our algorithms can be modified by iterating the original algorithm using ${\tt EXPLORE}={\tt EXPLORE}_i$ and $E=E_i$ in the $i$-th iteration. Iterations proceed until rendezvous, which will occur when $2^i$ is at least the actual size of the graph. Due to telescoping, the
time and cost complexities will not change. 

%
%

\bibliographystyle{plain}


\end{document}